\providecommand{\U}[1]{\protect\rule{.1in}{.1in}}
\theoremstyle{plain}
\newtheorem{theorem}{Theorem}[section]
\newtheorem{lemma}[theorem]{Lemma}
\newtheorem{proposition}[theorem]{Proposition}
\theoremstyle{definition}
\theoremstyle{remark}
\newtheorem{remark}[theorem]{Remark}
\numberwithin{equation}{section}
\begin{document}
\title{The structure of rationally factorized Lax type flows and their analytical integrability}
\author{Myroslava \ I. Vovk, \ Petro Pukach, Oksana Ye. Hentosh$^{1)}$ and Yarema A.
Prykarpatsky$^{2)}$}
\address{$^{1)}$The Institute for Applied Problems of Mechanics and Mathematics at the
NAS, Lviv, 79060 Ukraine\\
$^{2)}$The Department of Applied Mathematics at the University of Agriculture
in Krakow, 30059, Poland}
\email{ohen@ukr.net, yarpry@gmail.com}

\begin{abstract}
In the article we construct a wide class of differential-functional dynamical
systems, whose rich algebraic structure makes their integrability analytically
effective. In particular, there is analyzed in detail the operator Lax type
equations for factorized seed elements, there is proved an important theorem
about their operator factorization and the related analytical solution scheme
to the corresponding nonlinear differential-functional dynamical systems.

\end{abstract}
\maketitle

\section{\label{Sec_1}The basic associative algebra case}

There is considered an associative functional algebra ${\mathcal{A}}\subset
C^{\infty}(\mathbb{S}^{1};\mathbb{C}),$ admitting the automorphism
\begin{equation}
T\circ a(x):=a(x+\delta i)\ \label{d6.1}%
\end{equation}
for any $a\in{\mathcal{A}},$ being a simple shift on $i\delta\in
i\mathbb{R}\backslash\{0\}\mathbb{\subset C},$ $\ i^{2}=-1,$ $2\pi
/\delta\notin\mathbb{Z}_{+},$ along the complexified loop parameter
$x\in\mathbb{S}^{1}\mathbb{\subset C}.$ The linear and invariant
trace-functional $\tau:{\mathcal{A}}\rightarrow\mathbb{C}$ \ is defined for
any $a\in\mathcal{A}$ by the natural expression :%
\begin{equation}
\tau(a):=\int_{\mathbb{S}^{1}}a(x)dx. \label{d6.1a}%
\end{equation}
Having constructed the basic Lie algebra $\mathcal{G}$ \ of homomorphisms
$A(T)\in Hom{\mathcal{A}},$ where%
\begin{equation}
A(T)\sim\sum^{j\ll\infty}a_{j}(x)T^{j} \label{d6.2}%
\end{equation}
for $a_{j}(y)\in{\mathcal{A}},j\ll\infty.$ As the related Lax type integrable
dynamical systems are generated \cite{Blas,1-Bo,11-BSP,BlPrSa,FaTa,ReSe} by
the Casimir invariants $\gamma\in I(\mathcal{G}^{\ast})$ of the basic Lie
algebra $\mathcal{G},$ satisfying the determining equation \
\begin{equation}
\lbrack\nabla\gamma(l),l]=0, \label{d6.2a}%
\end{equation}
we will be interested in a seed Lax element $l\in\mathcal{G}^{\ast},$ chosen
in the following rationally factorized form:%
\begin{equation}
l:=F_{n}(T)^{-1}\circ Q_{n+p}(T), \label{d6.3}%
\end{equation}
where by definition, the elements \
\begin{equation}
F_{n}(T):=\sum_{j=\overline{0,n}}f_{j}(x)T^{j},\text{ \ \ }Q_{n+p}%
(T):=\sum_{j=\overline{0,n+p}}q_{j}(x)T^{j}\ \label{d6.4}%
\end{equation}
are some polynomial homomorphisms of ${\mathcal{A}}$ for fixed $n$ and
$p\in\mathbb{Z}_{+}.$ \ 

The following \textbf{problem}
\cite{Adler,BlPr-1,BlPrSa,BoLiXi,Dick,Dick-1,HeLe,Kric} arises:
\ \textit{construct the corresponding dynamical systems on the elements}
$F_{n}(T),Q_{n+p}(T)\in Hom{\mathcal{A}},$ \textit{which will possess an
infinite hierarchy of functional invariants and will be analytically
integrable. }

It is natural to consider the general Lax type flow $\ $%
\begin{equation}
dl/dt=[l,\nabla\gamma(l)_{+}], \label{d6.5}%
\end{equation}
for the rational element \ (\ref{d6.3}), generated by a Casimir functional
$\gamma\in$ $I(\mathcal{G}^{\ast})$ and determined by the expression
\ (\ref{d6.2a}). Now let us observe that $\gamma:=tr(\gamma(l)=tr(\gamma
(\tilde{l}))\ \ $for any analytical mapping $\gamma(l)\in\mathcal{G}$ $,$
\ where we have introduced, by definition, the factorized element $\tilde
{l}:=Q_{n+p}F_{n}^{-1}\in\mathcal{G}^{\ast}.$ In addition, the element
$\tilde{l}\ =Q_{n+p}F_{n}^{-1}\in\mathcal{G}^{\ast}$ satisfies the similar to
\ (\ref{d6.5}) evolution equation
\begin{equation}
\text{\ \ \ \ \ \ \ }d\tilde{l}/dt=[\tilde{l},\nabla\gamma(\tilde{l})_{+}]
\label{d6.6}%
\end{equation}
for the same Casimir functional $\gamma\in$ $I(\mathcal{G}^{\ast}),$ whose
gradient, similarly to \ (\ref{d6.2a}), is determined from the algebraic
relationship
\begin{equation}
\lbrack\tilde{l},\nabla\gamma(\tilde{l})]=0.\text{ \ } \label{d6.6a}%
\end{equation}
\ Taking now into account these two compatible equations \ (\ref{d6.5}) and
(\ref{d6.6}) \ one easily derives the following \cite{Dick,Dick-1,Kric}T.
Shiota factorization theorem.

\begin{theorem}
\label{Tm_d6.1} The operator evolution equations
\begin{equation}
dF_{n}/dt=F_{n}\nabla\gamma(l)_{+}-\nabla\gamma(\tilde{l})_{+}F_{n},\text{
\ \ \ \ \ \ \ \ \ \ }dQ_{n+p}/dt=Q_{n+p}\nabla\gamma(l)_{+}-\nabla
\gamma(\tilde{l})_{+}Q_{n+p} \label{d6.7}%
\end{equation}
factorize the Lax type flows \ (\ref{d6.5}) and \ (\ref{d6.6}) for all
$t\in\mathbb{R}$ with elements $l=F_{n}^{-1}Q_{n+p}\in\ \mathcal{G}^{\ast}\ $
and $\tilde{l}=Q_{n+p}F_{n}^{-1}\in\ \mathcal{G}^{\ast},$ respectively,
\ where the corresponding Casimir invariants $\gamma\in I(\mathcal{G}^{\ast
})\ $ satisfy the relationship $\gamma|_{l=F_{n}^{-1}Q_{n+p}}=\gamma
|_{\tilde{l}=Q_{n+p}F_{n}^{-1}}\ $ for any $F_{n}$ and $Q_{n+p}\in\ G_{+}.$
\end{theorem}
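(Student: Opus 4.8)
The plan is to prove the theorem by a direct differentiation of the two rational representatives $l=F_{n}^{-1}Q_{n+p}$ and $\tilde{l}=Q_{n+p}F_{n}^{-1}$, taking (\ref{d6.7}) as an ansatz and verifying that it reproduces the Lax flows (\ref{d6.5}) and (\ref{d6.6}). First I would differentiate $l=F_{n}^{-1}Q_{n+p}$ by the product rule together with $dF_{n}^{-1}/dt=-F_{n}^{-1}(dF_{n}/dt)F_{n}^{-1}$, obtaining
\[
dl/dt=-F_{n}^{-1}(dF_{n}/dt)F_{n}^{-1}Q_{n+p}+F_{n}^{-1}(dQ_{n+p}/dt);
\]
substituting the right-hand sides of (\ref{d6.7}) and cancelling every factor $F_{n}^{-1}F_{n}=1$ that occurs, the two terms $\pm F_{n}^{-1}\nabla\gamma(\tilde{l})_{+}Q_{n+p}$ annihilate each other and the remainder collapses to $l\,\nabla\gamma(l)_{+}-\nabla\gamma(l)_{+}\,l=[l,\nabla\gamma(l)_{+}]$, which is (\ref{d6.5}). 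The symmetric computation for $\tilde{l}=Q_{n+p}F_{n}^{-1}$, using (\ref{d6.7}) and $F_{n}F_{n}^{-1}=1$, makes the two terms $\pm Q_{n+p}\nabla\gamma(l)_{+}F_{n}^{-1}$ cancel and leaves $[\tilde{l},\nabla\gamma(\tilde{l})_{+}]$, i.e. (\ref{d6.6}). This purely algebraic cancellation is the core of the statement.

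Next I would check that (\ref{d6.7}) is internally consistent. From the definitions one reads off the intertwining relations $F_{n}\,l=\tilde{l}\,F_{n}$ and $Q_{n+p}\,l=\tilde{l}\,Q_{n+p}$, so $F_{n}$ (which in fact conjugates $l$ into $\tilde{l}$) and $Q_{n+p}$ intertwine $l$ and $\tilde{l}$; hence they intertwine every power series in these elements, and in particular, since the Casimir gradient $\nabla\gamma$ is a function of its argument alone, $F_{n}\nabla\gamma(l)=\nabla\gamma(\tilde{l})\,F_{n}$ and $Q_{n+p}\nabla\gamma(l)=\nabla\gamma(\tilde{l})\,Q_{n+p}$. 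Taking traces of powers of $l$ and $\tilde{l}$, the same intertwining gives $\gamma|_{l=F_{n}^{-1}Q_{n+p}}=\gamma|_{\tilde{l}=Q_{n+p}F_{n}^{-1}}$, which is the invariance asserted in the theorem and is exactly the cyclicity $tr(\gamma(F_{n}^{-1}Q_{n+p}))=tr(\gamma(Q_{n+p}F_{n}^{-1}))$ noted before it. Splitting $F_{n}\nabla\gamma(l)=\nabla\gamma(\tilde{l})F_{n}$ into its $(\cdot)_{+}$ and $(\cdot)_{-}$ parts rewrites the first equation of (\ref{d6.7}) as $dF_{n}/dt=F_{n}\nabla\gamma(l)_{+}-\nabla\gamma(\tilde{l})_{+}F_{n}=\nabla\gamma(\tilde{l})_{-}F_{n}-F_{n}\nabla\gamma(l)_{-}$; the left expression has its $T$-support among the non-negative powers while the right one has $T$-support in powers of degree below $n$, so $dF_{n}/dt$ lies in the linear span of $T^{0},\dots,T^{n}$ and the flow keeps $F_{n}=\sum_{j=\overline{0,n}}f_{j}(x)T^{j}$ inside the class (\ref{d6.4}); the analogous count bounds the $T$-support of $dQ_{n+p}/dt$ by $n+p$. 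Invertibility of $F_{n}$ is transported along the flow by the conjugation $\tilde{l}=F_{n}lF_{n}^{-1}$.

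The point I expect to require the most care is not any of these identities but the rigorous claim that (\ref{d6.7}) actually defines a well-posed differential-functional dynamical system on the finite data $F_{n},Q_{n+p}$ rather than a purely formal relation: one must control the inversion $F_{n}^{-1}$ inside $\mathcal{G}$, the projections $(\cdot)_{\pm}$, and the induced evolution of the coefficient functions $f_{j},q_{j}\in\mathcal{A}$, and confirm that the degree bookkeeping above holds uniformly in $t$ on the interval where the factorization persists. Granting this, and combining it with $\gamma|_{l}=\gamma|_{\tilde{l}}$, one obtains that the equations (\ref{d6.7}) factorize the Lax type flows (\ref{d6.5}) and (\ref{d6.6}) for all such $t\in\mathbb{R}$, with the elements $l=F_{n}^{-1}Q_{n+p}\in\mathcal{G}^{\ast}$ and $\tilde{l}=Q_{n+p}F_{n}^{-1}\in\mathcal{G}^{\ast}$ respectively, as claimed.
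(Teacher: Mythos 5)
Your proposal is correct: substituting the ansatz (\ref{d6.7}) into $dl/dt=-F_{n}^{-1}(dF_{n}/dt)F_{n}^{-1}Q_{n+p}+F_{n}^{-1}(dQ_{n+p}/dt)$ and into the analogous expression for $\tilde{l}=Q_{n+p}F_{n}^{-1}$, with the stated cancellations, is exactly the computation the paper leaves implicit behind ``one easily derives,'' and it does yield (\ref{d6.5}) and (\ref{d6.6}). Your supplementary observations --- the intertwining $F_{n}l=\tilde{l}F_{n}$, $Q_{n+p}l=\tilde{l}Q_{n+p}$ giving $\gamma|_{l}=\gamma|_{\tilde{l}}$, and the degree count showing $dF_{n}/dt$ stays in the span of $T^{0},\dots,T^{n}$ --- are consistent with the paper's setup and fill in details it omits.
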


As a simple consequence from Theorem \ref{Tm_d6.1} one can derive the
following proposition.

\begin{proposition}
\label{Prop_6.2}There exist such smooth mappings $\Phi,\Psi:$ $\mathbb{R}$
$\rightarrow G\ \ $ to the formal operator subgroup $G\simeq\exp\mathcal{G}$
satisfying the linear evolution equations
\begin{equation}
d\Phi/dt+\nabla\gamma(\tilde{l})_{+}\Phi=0,\text{ \ \ \ \ }d\Psi
/dt+\nabla\gamma(l)_{+}\Psi=0\ \label{d6.8}%
\end{equation}
$\Phi|_{t=0}=\bar{\Phi},B|_{t=0}=\bar{B}\in G,$ \ \ generated, respectively,
by the Lie algebra elements $\nabla\gamma(l)_{+}$ \ and $\ \nabla\gamma
(\tilde{l})_{+}\in\mathcal{G}_{+},$ that
\begin{equation}
F_{n}:=\Psi\text{ }\bar{F}_{n}\Phi^{-1},\text{ \ \ \ \ \ \ \ \ }Q_{n+p}%
:=\Psi\text{ }\bar{Q}_{n+p}\Phi^{-1}, \label{d6.9}%
\end{equation}
where, by definition, the elements $\bar{F}_{n}$ \ and $\bar{Q}_{n+p}\in G$
\ \ are constant with respect to the evolution parameter $t\in\mathbb{R}.$
\end{proposition}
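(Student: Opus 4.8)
The plan is to produce $\Phi$ and $\Psi$ as the fundamental solutions of the two linear operator equations (\ref{d6.8}) and then to recover (\ref{d6.9}) by differentiating in $t$ the conjugated elements $\Psi^{-1}F_{n}\Phi$ and $\Psi^{-1}Q_{n+p}\Phi$. For the existence step, the coefficients $\nabla\gamma(l)_{+}$ and $\nabla\gamma(\tilde l)_{+}$ are, by construction, smooth curves $\mathbb{R}\rightarrow\mathcal{G}_{+}$ in the Lie subalgebra of polynomial homomorphisms of $\mathcal{A}$, so each equation in (\ref{d6.8}) is a linear evolution equation in the formal operator group $G\simeq\exp\mathcal{G}$ that can be solved recursively with respect to the (triangular) $T$-grading: at every grading level one is left with a linear ordinary differential equation in the function algebra $\mathcal{A}$ with smooth $t$-dependent coefficients, uniquely solvable for all $t\in\mathbb{R}$. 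Assembling these solutions, and using that the generators lie in $\mathcal{G}_{+}$ so that the group property is propagated along the flow, yields the required smooth mappings $\Phi,\Psi:\mathbb{R}\rightarrow G$ with the prescribed initial data $\Phi|_{t=0}=\bar\Phi$, $\Psi|_{t=0}=\bar\Psi\in G$; equivalently, $\Phi$ and $\Psi$ are the chronologically ordered exponentials generated, respectively, by $-\nabla\gamma(l)_{+}$ and $-\nabla\gamma(\tilde l)_{+}$.

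With $\Phi$ and $\Psi$ in hand, I would put $\bar F_{n}:=\Psi^{-1}F_{n}\Phi$ and $\bar Q_{n+p}:=\Psi^{-1}Q_{n+p}\Phi$ and compute the $t$-derivatives. Using $d(\Psi^{-1})/dt=-\Psi^{-1}(d\Psi/dt)\Psi^{-1}$ together with the evolution equations (\ref{d6.8}) for $\Phi,\Psi$ and the factorization equations (\ref{d6.7}) of Theorem \ref{Tm_d6.1} for $F_{n},Q_{n+p}$, the four terms appearing in $d\bar F_{n}/dt$ — one from $d(\Psi^{-1})/dt$, two from $dF_{n}/dt$, one from $d\Phi/dt$ — cancel in pairs, and likewise for $d\bar Q_{n+p}/dt$. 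Hence $\bar F_{n}$ and $\bar Q_{n+p}$ are independent of $t$, equal to their values at $t=0$, and inverting the conjugation gives exactly (\ref{d6.9}). One must keep the pairing of generators consistent — namely $\Phi$ driven by $\nabla\gamma(l)_{+}$ and $\Psi$ by $\nabla\gamma(\tilde l)_{+}$ — since this is precisely the choice that forces the cancellation; as a consistency check, (\ref{d6.9}) then yields $l=F_{n}^{-1}Q_{n+p}=\Phi\,(\bar F_{n}^{-1}\bar Q_{n+p})\,\Phi^{-1}$ and $\tilde l=Q_{n+p}F_{n}^{-1}=\Psi\,(\bar Q_{n+p}\bar F_{n}^{-1})\,\Psi^{-1}$, i.e. $l$ and $\tilde l$ are the constant seed factorizations transported by conjugation, in agreement with the invariance $\gamma|_{l}=\gamma|_{\tilde l}$ of Theorem \ref{Tm_d6.1}.

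The only genuine obstacle is the infinite-dimensional bookkeeping in the first step: one must fix the ambient formal group $G$ (as in \cite{Dick,Dick-1,Kric}) large enough to contain $F_{n}$, $Q_{n+p}$, their inverses and the solutions $\Phi,\Psi$, and one must check that the operator products $\Psi^{-1}F_{n}\Phi$, $\Psi^{-1}Q_{n+p}\Phi$ and their term-by-term differentiation are meaningful in that completion — here the triangular structure of the $T$-grading is what makes the recursion and the manipulations legitimate. Once the ambient group is fixed, the second step is a purely formal computation, and the proposition drops out as an immediate corollary of Theorem \ref{Tm_d6.1}.
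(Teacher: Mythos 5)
Your argument is correct and is essentially the paper's own proof: the paper disposes of the proposition with the one-line remark that one need only check, using (\ref{d6.8}), that the elements (\ref{d6.9}) satisfy the factorized equations (\ref{d6.7}), which is exactly the four-term cancellation you carry out — you merely run it in the reverse direction, showing that $\Psi^{-1}F_{n}\Phi$ and $\Psi^{-1}Q_{n+p}\Phi$ are $t$-constant, and you additionally supply the existence of $\Phi,\Psi$ via the triangular $T$-graded recursion, which the paper takes for granted. One point worth recording: the pairing you insist on, namely $\Phi$ generated by $\nabla\gamma(l)_{+}$ and $\Psi$ by $\nabla\gamma(\tilde{l})_{+}$, is indeed the unique choice for which $F_{n}=\Psi\,\bar{F}_{n}\Phi^{-1}$ reproduces $dF_{n}/dt=F_{n}\nabla\gamma(l)_{+}-\nabla\gamma(\tilde{l})_{+}F_{n}$; it agrees with the verbal description in the statement (``generated, respectively, by $\nabla\gamma(l)_{+}$ and $\nabla\gamma(\tilde{l})_{+}$'') but not with the displayed formula (\ref{d6.8}), in which $l$ and $\tilde{l}$ appear interchanged, so your version is the internally consistent one.
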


\begin{proof}
It is enough to check, using \ (\ref{d6.8}) that the group elements
\ (\ref{d6.9}) really satisfy the factorized evolution equations \ (\ref{d6.7}).
\end{proof}

Now based on Proposition \ref{Prop_6.2} we can take into account, with no loss
of generality, that the group elements $A,B\in G\ $ for all $t\in\mathbb{R}$
\ can be represented as $\ $\ operator series
\begin{equation}
\Phi(x;t)\sim I+\sum_{j\in\mathbb{Z}_{+}}a_{j}(x;t)T^{-j},\text{ \ \ \ \ }%
\Psi(x;t)\sim I+\sum_{j\in\mathbb{Z}_{+}}b_{j}(x;t)T^{-j},\label{d6.10}%
\end{equation}
whose coefficients can be found recurrently from the expressions
\ (\ref{d6.9}), rewritten in the following useful for calculations form:%
\begin{align}
(I+\sum_{j\in\mathbb{Z}_{+}}b_{j}(x;t))T^{-j})\circ\bar{F}_{n} &  =\ \text{
}F_{n}\circ(I+\sum_{j\in\mathbb{Z}_{+}}a_{j}(x;t)T^{-j}),\text{ \ \ \ }%
\label{d6.11}\\
& \nonumber\\
(I+\sum_{j\in\mathbb{Z}_{+}}b_{j}(x;t)T^{-j})\circ\bar{Q}_{n+p} &  =\ \text{
}Q_{n+p}\circ(I+\sum_{j\in\mathbb{Z}_{+}\backslash\{0\}}a_{j}(x;t)T^{-j}%
),\text{ }\nonumber
\end{align}
where the group elements $\bar{F}_{n}$ and $\bar{Q}_{n+p}\in G_{\ }$ are
considered to be given \textit{a priori \ }constant in the \ following,
motivated by the expression \ (\ref{d6.4}), $\ $\ operator series form:%
\begin{equation}
\bar{F}_{n}(T)\sim\sum_{j\in\mathbb{Z}_{+}}\bar{f}_{j}T^{n-j},\text{ \ \ }%
\bar{Q}_{n+p}(T)\sim\sum_{j\ \in\mathbb{Z}_{+}}\bar{q}_{j}T^{n+p-j}%
\ ,\label{d6.4a}%
\end{equation}
where $d\bar{f}_{j}/dt=0=d\bar{q}_{j}/dt,j\in\mathbb{Z}_{+},$ for all
$\ t\in\mathbb{R}.$ \ The results obtained above mean, in particular, that the
expressions\textit{ }\ (\ref{d6.11}) can be effectively used for finding exact
analytical solutions to the resulting differential-functional equations
naturally following from the operator evolution equations \ (\ref{d6.7}),
generated by a suitably chosen Casimir functional $\gamma\in$ $\mathrm{I}%
(\mathcal{G}^{\ast}).$ This and other related aspects of this important
problem of finding exact analytical solutions will be analyzed in detail in
other work under preparation.

\section{\label{Sec_2}The centrally extended basic associative algebra case}

Consider now the case when the basic associative functional algebra
${\mathcal{A}}\subset C^{\infty}(\mathbb{S}^{1};\mathbb{C})$ is extended as
the $\mathtt{loop}$ algebra $C_{S^{1}}({\mathcal{A)}}$ of smooth mappings
$\{\mathbb{S}^{1}\rightarrow{\mathcal{A}}\}.$ The corresponding Lie algebra
$C_{S^{1}}(\mathcal{G}{\mathcal{)}}$ of linear homomorphisms $C_{S^{1}%
}(\mathcal{A}{\mathcal{)}},$ naturally generated by the complexified
homomorphic shifts \ (\ref{d6.1}) along the cyclic variable $x\in
\mathbb{S}^{1},$ can be centrally extended to the Lie algebra $C_{S^{1}%
}(\widehat{\mathcal{G}}{\mathcal{)}}$ via the standard
\cite{FaTa,BlPrSa,PHS,ReSe} Maurer-Cartan cocycle \
\begin{equation}
\omega_{2}(A(T),B(T)):=\int_{{\mathbb{S}}^{1}}dy\,\langle A(T),dB(T)/dy\rangle
, \label{d6.11a}%
\end{equation}
which also admits the natural splitting subject to the positive and negative
degrees of the basic homomorphism \ (\ref{d6.1}) into two subalgebras
\begin{equation}
C_{\mathbb{S}^{1}}(\widehat{\mathcal{G}})=C_{\mathbb{S}^{1}}(\widehat
{\mathcal{G}}{\mathcal{)}}_{+}+C_{\mathbb{S}^{1}}(\widehat{\mathcal{G}%
}{\mathcal{)}}_{-}. \label{d6.12}%
\end{equation}
The latter makes it possible to construct the adjoint splitting
\begin{equation}
C_{\mathbb{S}^{1}}(\widehat{\mathcal{G}}^{\ast})=C_{\mathbb{S}^{1}}%
(\widehat{\mathcal{G}}^{\ast}{\mathcal{)}}_{+}+C_{\mathbb{S}^{1}}%
(\widehat{\mathcal{G}}^{\ast}{\mathcal{)}}_{-} \label{d6.13}%
\end{equation}
and define for any factorized element $(l,1)\in C_{\mathbb{S}^{1}}%
(\widehat{\mathcal{G}}^{\ast})$ the following \ integrable Hamiltonian flows%
\begin{equation}
dl/dt=[l-d/dy,\nabla\gamma(l)_{+}],\text{ \ } \label{d6.14}%
\end{equation}
where the Casimir functionals $\gamma\in I(C_{\mathbb{S}^{1}}(\widehat
{\mathcal{G}}^{\ast}))$ satisfy the gauge type differential-functional
equation
\begin{equation}
\lbrack l,\nabla\gamma(l)]=d\nabla\gamma(l)/dy\ \label{d6.15}%
\end{equation}
for all $y\in\mathbb{S}^{1}.$ Here as above we will consider the case when a
rationally factorized element $l(T)\in C_{\mathbb{S}^{1}}(\mathcal{G}^{\ast})$
is given in the form \
\begin{equation}
l(T):=F_{n\ }(T)^{-1}\circ Q_{n+p}(T), \label{d6.16}%
\end{equation}
where, by definition, the elements \
\begin{equation}
F_{n}(T):=\sum_{j=\overline{0,n\ }}f_{j}(x;y)T^{j},\text{ \ \ }Q_{n+p}%
(T):=\sum_{j=\overline{0,n}+p}q_{j}(x;y)T^{j}\ \label{d6.17}%
\end{equation}
belong to the formal operator subgroup $C_{\mathbb{S}^{1}}(G_{+}%
):=\exp(C_{\mathbb{S}^{1}}(\mathcal{G}_{+}))\simeq I+C_{\mathbb{S}^{1}%
}(\mathcal{G}_{+}).$

Inasmuch in this case we also \ can not make use of the $\ $\ expansions
\ (\ref{d6.10}), thus forcing us \ to apply the Lie-algebraic scheme of
\cite{BlPr-1,BlPrSa,ReSe-1}. Namely, we will formulate the following similar
statements without proof.

\begin{lemma}
\label{Lm_d.6.1}For any\ factorized in the rational form \ (\ref{d6.16})
element $l\in C_{\mathbb{S}^{1}}(\mathcal{G}^{\ast})$ \ there exists, as
$2\pi/\delta\notin\mathbb{Z}_{+},$ an invertible mapping $\Phi(T)\in
C_{\mathbb{S}^{1}}(G_{-}),$ $\Phi(T)|_{x=0}=I,$ and such an element $\bar
{l}\in C_{\mathbb{S}^{1}}({\mathcal{G}}^{\ast})$ that the following functional
operator relationship
\begin{equation}
(\partial/\partial y-l(T))\circ\Phi(T)=\Phi(T)\circ(\partial/\partial
y-\bar{l}(T)\ )\ \label{d6.18}%
\end{equation}
holds, where $\partial\bar{l}(T)/\partial t=0=$ $\partial\bar{l}(T)/\partial
x,$ \ that is the element $\bar{l}(T)\in C_{\mathbb{S}^{1}}({\mathcal{G}%
}^{\ast})$ is constant \ both with respect to the evolution parameter
$t\in\mathbb{R}$ and the functional algebra $\mathcal{A}$ parameter
$x\in\mathbb{S}^{1}.$
\end{lemma}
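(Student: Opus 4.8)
The plan is to construct the gauge transformation $\Phi(T)$ recursively in the negative degrees of $T$, exploiting the fact that $2\pi/\delta\notin\mathbb{Z}_{+}$ makes the relevant difference operator invertible. First I would write $\Phi(T)\sim I+\sum_{j\in\mathbb{Z}_{+}\backslash\{0\}}\varphi_{j}(x;y)T^{-j}$ with $\varphi_{j}|_{x=0}=0$, and likewise expand $\bar{l}(T)=\sum_{k\ll\infty}\bar{l}_{k}T^{k}$ with coefficients $\bar{l}_{k}$ independent of $x$ and $t$. Substituting these Ans\"atze into the intertwining relation \eqref{d6.18}, namely $(\partial/\partial y-l(T))\circ\Phi(T)=\Phi(T)\circ(\partial/\partial y-\bar{l}(T))$, and matching coefficients of each power $T^{m}$, one obtains a triangular system: at the top orders the equation fixes $\bar{l}_{k}$ in terms of the leading data of $l(T)=F_n(T)^{-1}Q_{n+p}(T)$, and at each lower order $m$ the unknown $\varphi_{j}$ (for the appropriate $j$) enters through a term of the form $\partial\varphi_{j}/\partial y +(\text{shift operator acting on }\varphi_{j})=(\text{already-known lower-order data})$.

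The heart of the matter is solvability of that recursion at each step. The operator one must invert has the schematic shape $\varphi\mapsto \partial\varphi/\partial y + \bar{l}_{0}\varphi - (T^{j}\circ\varphi\circ T^{-j})\,\bar{l}_{0}$ plus lower-order corrections, and because $T$ acts as the shift $x\mapsto x+\delta i$, the homogeneous part is a first-order linear ODE in $y$ combined with a difference operator in $x$. The condition $2\pi/\delta\notin\mathbb{Z}_{+}$ guarantees that the shift $T$ has no nontrivial periodic eigenfunctions of the resonant type on $\mathbb{S}^{1}$, so the associated operator is invertible on the relevant function space; this is exactly where that hypothesis is used (it is the same non-resonance condition invoked for the automorphism \eqref{d6.1} and in the choice of the trace-functional). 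Granting this invertibility, each $\varphi_{j}$ is determined uniquely by the normalization $\Phi(T)|_{x=0}=I$, and the construction proceeds order by order without obstruction.

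Next I would verify that the $\bar{l}_{k}$ produced by the top-order equations are genuinely $x$- and $t$-independent. The $x$-independence is built into the Ansatz and is consistent because the matching equations at the top degrees, after using the recursively constructed $\Phi$, collapse to relations among constants; the $t$-independence follows from differentiating \eqref{d6.18} in $t$, using the Lax flow \eqref{d6.14} for $l$ together with the factorized evolution \eqref{d6.7} (equivalently the evolution of $F_n,Q_{n+p}$), and observing that $\partial\Phi/\partial t$ satisfies the same homogeneous recursion with zero right-hand side, hence vanishes by the uniqueness just established. This is the analogue, in the loop-extended setting, of Proposition~\ref{Prop_6.2}, and it is why the lemma can be stated as a clean dressing result.

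The main obstacle I anticipate is not the formal bookkeeping but making the invertibility step rigorous on the chosen functional space: one must be careful that $\Phi(T)\in C_{\mathbb{S}^{1}}(G_{-})$, i.e.\ that the series in negative powers of $T$ has coefficients in $\mathcal{A}\subset C^{\infty}(\mathbb{S}^{1};\mathbb{C})$ and that solving the first-order $y$-ODE with the difference-operator coefficient does not destroy smoothness or the loop structure in $y$. Concretely, the resolvent of the shift-plus-derivative operator must be shown bounded, which is where the Diophantine-type condition $2\pi/\delta\notin\mathbb{Z}_{+}$ does the real work; once that resolvent estimate is in hand the rest of the argument is the routine triangular induction sketched above. (Since the paper elects to state this lemma \emph{without proof}, these remarks indicate the route one would follow to supply one.)
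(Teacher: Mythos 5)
Your proposal follows essentially the same route as the paper's own (explicitly labelled) sketch: expand $\Phi(T)\sim I+\sum_{j}\varphi_{j}(x;y)T^{-j}$ and $\bar{l}(T)$ in powers of $T$, substitute into the intertwining relation, and determine the coefficients by a triangular recursion, with the non-resonance condition $2\pi/\delta\notin\mathbb{Z}_{+}$ guaranteeing solvability of the shift-operator equations at each step. The only cosmetic difference is that the paper first premultiplies \eqref{d6.18} by $F_{n}(T)$ to obtain the polynomial form \eqref{d6.18a} before running the recursion, whereas you work directly with $l=F_{n}^{-1}Q_{n+p}$; your additional remarks on where the hypothesis on $\delta$ enters and on the $t$-independence of $\bar{l}$ go beyond what the paper records, but are consistent with it.
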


\begin{proof}
\textit{Sketch of a proof.} Taking into account that $l(T):=F_{n\ }%
(T)^{-1}\circ Q_{n+p}(T)\in C_{\mathbb{S}^{1}}({\mathcal{G}}^{\ast}),$ the
operator relationship \ (\ref{d6.18}) can be equivalently rewritten as
\begin{equation}
(F_{n\ }(T)\partial/\partial y-Q_{n+p}(T))\circ\Phi(T)=F_{n\ }(T)\circ
\Phi(T)\circ(\partial/\partial y-\bar{l}(T)), \label{d6.18a}%
\end{equation}
where
\begin{equation}
\bar{l}(T)\sim\sum_{j\in\mathbb{Z}_{+}}l_{j}(y)T^{p-j} \label{d6.18ab}%
\end{equation}
is constant, it allows to determine recurrently all \ coefficients of the
corresponding invertible operator expansion
\begin{equation}
\Phi(T)\sim I+\sum_{j\in\mathbb{Z}_{+}}\varphi_{j}(x;y)T^{-j} \label{d6.18b}%
\end{equation}
for all $(x;y)\in{\mathbb{S}}^{1}\times{\mathbb{S}}^{1}.$ The latter proves
the lemma.
\end{proof}

\begin{theorem}
\label{Tm_d6.2} The following functionals
\begin{equation}
\gamma_{j}=Tr(T^{j}\bar{l}(T))=\int_{{\mathbb{S}}^{1}}tr(\bar{l}%
_{j}(y))dy=\int_{{\mathbb{S}}^{1}}\tau(\bar{l}_{j}(y))dy, \label{d6.19}%
\end{equation}
where, by definition,
\begin{equation}
\bar{l}\sim\sum_{j\in\mathbb{Z}_{+}}\bar{l}_{j}(y)T^{p-j}, \label{d6.19a}%
\end{equation}
are for all $j\in$ $\mathbb{Z}_{+}$ the Casimir invariants for the centrally
extended loop Lie algebra $C_{\mathbb{S}^{1}}(\widehat{\mathcal{G}}).$
\end{theorem}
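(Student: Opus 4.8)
The plan is to show that each $\gamma_j$ defined by \eqref{d6.19} is invariant under the coadjoint action of the centrally extended loop algebra $C_{\mathbb{S}^1}(\widehat{\mathcal{G}})$, i.e. that it satisfies the defining equation \eqref{d6.15} for Casimir functionals. First I would compute the gradient $\nabla\gamma_j(l)$ with respect to the element $l\in C_{\mathbb{S}^1}(\mathcal{G}^\ast)$. Since $\gamma_j = Tr(T^j\,\bar l(T))$ and by Lemma \ref{Lm_d.6.1} the element $\bar l(T)$ is obtained from $l(T)$ by the conjugation $(\partial/\partial y - l)\circ\Phi = \Phi\circ(\partial/\partial y - \bar l)$, the gradient transforms contravariantly: one finds $\nabla\gamma_j(l) = \Phi\,T^j\,\Phi^{-1}$ up to the terms needed to respect the $\partial/\partial y$ shift, using the standard trace identity $Tr(\delta\bar l \cdot X) = Tr(\delta l \cdot \Phi X \Phi^{-1})$ that follows from differentiating the conjugation relation. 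The point is that $\nabla\gamma_j(\bar l) = T^j$ is manifestly a Casimir-gradient for the \emph{free} (constant-coefficient) algebra because $T^j$ commutes with $\bar l$ and is $y$-independent, and conjugation by $\Phi$ transports this property back to $l$.

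Next I would verify the gauge equation \eqref{d6.15}, namely $[l,\nabla\gamma_j(l)] = d\nabla\gamma_j(l)/dy$. Writing $X := \Phi T^j \Phi^{-1}$ and differentiating the Lemma's relation $\partial\Phi/\partial y = l\Phi - \Phi\bar l$, we get
\begin{equation}
\frac{dX}{dy} = \frac{d\Phi}{dy}T^j\Phi^{-1} + \Phi T^j\frac{d\Phi^{-1}}{dy}
= (l\Phi - \Phi\bar l)T^j\Phi^{-1} - \Phi T^j\Phi^{-1}(l\Phi - \Phi\bar l)\Phi^{-1}.
\end{equation}
Using that $\bar l$ is $y$-independent and commutes with $T^j$ (so $\bar l T^j = T^j\bar l$), the cross-terms involving $\bar l$ cancel, leaving $dX/dy = lX - Xl = [l,X]$, which is exactly \eqref{d6.15}. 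This confirms that $\gamma_j$ is a Casimir functional on $C_{\mathbb{S}^1}(\mathcal{G}^\ast)$; the central extension contributes the $d/dy$ term precisely through this mechanism, so no separate check of the cocycle \eqref{d6.11a} is needed beyond noting that \eqref{d6.15} is its infinitesimal coadjoint consequence.

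Finally I would record that the residue/trace formula gives $\gamma_j = \int_{\mathbb{S}^1}\tau(\bar l_j(y))\,dy$ after expanding $\bar l$ as in \eqref{d6.19a} and extracting the appropriate power of $T$; here $Tr$ is the composition of the operator-coefficient residue (picking the $T^0$ term, equivalently $\bar l_j$ from $T^j\bar l$) with the trace-functional $\tau$ from \eqref{d6.1a}, so the two displayed expressions for $\gamma_j$ agree by definition. The main obstacle I anticipate is the bookkeeping in the first step: showing carefully that the gradient of $\gamma_j$ really is the conjugate $\Phi T^j\Phi^{-1}$ (rather than this modified by extra $\partial/\partial y$ contributions) requires tracking how the variation $\delta l$ propagates through $\delta\Phi$ and $\delta\bar l$ under the constraint that $\Phi\in C_{\mathbb{S}^1}(G_-)$ with $\Phi|_{x=0}=I$ and that $\bar l$ stays $x$- and $t$-independent; once the pairing identity $Tr(T^j\,\delta\bar l) = Tr(\Phi T^j\Phi^{-1}\,\delta l)$ is established, the rest is the short computation above.
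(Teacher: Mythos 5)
The paper states Theorem \ref{Tm_d6.2} without proof (it is introduced among the ``statements without proof'' following Lemma \ref{Lm_d.6.1}), so your argument is not being measured against a proof in the text but supplies the omitted one; it is consistent with what the paper asserts immediately afterwards, namely the gradient formula (\ref{d6.20}) and the claim that this gradient satisfies the determining relation (\ref{d6.15}). Your central computation is correct: from the coefficient form $\partial\Phi/\partial y=l\Phi-\Phi\bar{l}$ of (\ref{d6.18}), together with $[T^{j},\bar{l}]=0$ (valid because $\bar{l}$ is $x$-independent while $T$ shifts only $x$), one gets $d(\Phi T^{j}\Phi^{-1})/dy=[l,\Phi T^{j}\Phi^{-1}]$, which is exactly (\ref{d6.15}) and hence the Casimir condition for the centrally extended bracket. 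The one step you should flesh out rather than assert is the gradient identity $Tr(T^{j}\,\delta\bar{l})=Tr(\Phi T^{j}\Phi^{-1}\,\delta l)$: varying (\ref{d6.21}) gives $\delta\bar{l}=\Phi^{-1}\delta l\,\Phi-[\partial/\partial y-\bar{l},\epsilon]$ with $\epsilon:=\Phi^{-1}\delta\Phi$, and the commutator term is annihilated inside $Tr(T^{j}\,\cdot)$ by cyclicity of the trace combined with $[T^{j},\partial/\partial y-\bar{l}]=0$ --- the same commutation fact you already invoke; this is precisely the ``bookkeeping'' you flagged, and it closes without surprises. With that supplied, and granting the formal-series conventions and the ad-invariance of the trace functional which the paper takes for granted throughout, your proof is complete and is the natural dressing-type argument the paper evidently intends.
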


Based on Theorem \ref{Tm_d6.2} \ one can find that the corresponding
gradients
\begin{equation}
\nabla\gamma_{j}(l)=\Phi(T)T^{j}\Phi(T)^{-1}, \label{d6.20}%
\end{equation}
$j\in$ $\mathbb{Z}_{+},$ for the countable hierarchy of Casimir functionals
\ \ (\ref{d6.19}) satisfy the determining relationship \ (\ref{d6.15}). \ In
addition, from \ (\ref{d6.18}) one ensues that the following \ operator
expression \ for $\ $the case $l=F_{n}^{-1}Q_{n+p}\in C_{\mathbb{S}^{1}%
}({\mathcal{G}}^{\ast}):$%
\begin{equation}
\bar{l}=\Phi(T)^{-1}(l-\partial/\partial y)\Phi(T),\ \label{d6.21}%
\end{equation}
holds for all $y\in\mathbb{S}^{1},$ where the invertible mapping $\Phi(T)\in
C_{\mathbb{S}^{1}}(\hat{G})$ satisfies the evolution equation
\begin{equation}
d\Phi(T)/dt+\nabla\gamma(l)_{+}\Phi(T)=0\ \ \label{d6.25}%
\end{equation}
for all $t\in\mathbb{R}.$ Similarly one can state that there exists a suitably
chosen mapping $\Psi(T)\in C_{\mathbb{S}^{1}}(G)\ $for the case $\tilde
{l}=Q_{n+p}F_{n}^{-1}\in C_{\mathbb{S}^{1}}(\mathcal{G}^{\ast}),\ $ such that
\
\begin{equation}
\overline{\tilde{l}}=\Psi(T)^{-1}(\tilde{l}-\partial/\partial y)\Psi
(T)\ \ \label{d6.22}%
\end{equation}
holds for some constant element $\widetilde{\bar{l}}\in C_{\mathbb{S}^{1}%
}(\mathcal{G}^{\ast})\ $with respect to both the evolution variable
$t\in\mathbb{R}$ and the functional parameter $x\in\mathbb{S}^{1},$ \ where
the invertible mapping $\Psi(T)\in C_{\mathbb{S}^{1}}(G)$ satisfies the
evolution equation
\begin{equation}
d\Psi(T)/dt+\nabla\gamma(l)_{+}\Psi(T)=0.\ \ \label{d6.22a}%
\end{equation}
Moreover, the element $\tilde{l}\in C_{\mathbb{S}^{1}}(\mathcal{G}^{\ast})$
satisfies the Lax type evolution equation
\begin{equation}
d\tilde{l}/dt=[\tilde{l}-d/dy,\nabla\gamma(\tilde{l})_{+}]\ \text{ \ }
\label{d6.22ab}%
\end{equation}
for all $t\in\mathbb{R}.$ Taking into account that the expression
\ (\ref{d6.22}) can be equivalently rewritten as
\begin{equation}
(F_{n\ }(T)\partial/\partial y-Q_{n+p}(T)\ )\circ\Psi(T)=F_{n\ }(T)\circ
\Psi(T)\circ(\partial/\partial y-\overline{\tilde{l}}(T)), \label{d6.22b}%
\end{equation}
from \ (\ref{d6.22b}), \ (\ref{d6.18a}) and evolution equations \ (\ref{d6.25}%
), \ (\ref{d6.22a}) one can derive the corresponding factorized evolution
equations
\begin{equation}
d\mathrm{F}_{n}/dt=\mathrm{F}_{n}\nabla\gamma(l)_{+}-\nabla\gamma(\tilde
{l})_{+}\mathrm{F}_{n},\text{ \ }\mathrm{Q}_{n}/dt=\mathrm{Q}_{n}\nabla
\gamma(l)_{+}-\nabla\gamma(\tilde{l})_{+}\mathrm{Q}_{n}, \label{d6.23}%
\end{equation}
for the elements $\mathrm{F}_{n}:=F_{n}\in$ $C_{\mathbb{S}^{1}}(G)$ and
$\mathrm{Q}_{n+p}:=Q_{n+p}-F_{n}\partial/\partial y\in C_{\mathbb{S}^{1}}%
(\hat{G}),$ \ which allow the following natural representations
\begin{equation}
\mathrm{F}_{n}:=\Psi(T)\text{ }\mathrm{\bar{F}}_{n}\Phi(T)^{-1},\text{
\ \ \ \ \ \ \ \ }\mathrm{Q}_{n+p}:=\Psi(T)\text{ }\mathrm{\bar{Q}}_{n+p}%
\Phi(T)^{-1}\ \label{d6.24}%
\end{equation}
with $\mathrm{\bar{F}}_{n}:=\bar{F}_{n}\in C_{\mathbb{S}^{1}}(G)$ and
$\mathrm{\bar{Q}}_{n}:=\bar{Q}_{n}-\bar{F}_{n}\partial/\partial y\in
C_{\mathbb{S}^{1}}(G)$ being constants with respect to the evolution variables
$t\in\mathbb{R}$ and $x\in\mathbb{S}^{1}.$ \ Taking now into account the above
expressions \ (\ref{d6.24}) and \ (\ref{d6.25}) one easily obtains from
\ (\ref{d6.23}) the following evolutions equations
\begin{align}
dF_{n}/dt  &  =F_{n}\nabla\gamma(l)_{+}-\nabla\gamma(\tilde{l})_{+}%
F_{n},\text{ \ \ \ }\label{d6.26}\\
& \nonumber\\
dQ_{n+p}/dt  &  =Q_{n+p}\nabla\gamma(l)_{+}-\nabla\gamma(\tilde{l})_{+}%
Q_{n+p}-F_{n}\partial\nabla\gamma(l)_{+}/\partial y\nonumber
\end{align}
on the basic operator factors $F_{n}$ and $Q_{n+p}\in C_{\mathbb{S}^{1}}(G).$
The corresponding invertible mapping $\Phi$ and $\Psi\in C_{\mathbb{S}^{1}%
}(G),$ satisfying, respectively, the expressions\ (\ref{d6.24}), \ can be
recurrently constructed from the \ algebraic relationships
\begin{align}
F_{n}(I+\sum_{j\in\mathbb{Z}_{+}}a_{j}(x;y)T^{-j})  &  =(I+\sum_{j\in
\mathbb{Z}_{+}}b_{j}(x;y)T^{-j})\text{ }\bar{F}_{n},\ \label{d6.27}\\
& \nonumber\\
(Q_{n+p}-F_{n}\partial/\partial y)(I+\sum_{j\in\mathbb{Z}_{+}}a_{j}%
(x;y)T^{-j})  &  =(I+\sum_{j\in\mathbb{Z}_{+}}b_{j}(x;y)T^{-j})\text{ }%
(\bar{Q}_{n}-\bar{F}_{n}\partial/\partial y)\nonumber
\end{align}
in the series expansion form:
\begin{equation}
\Phi(T)\sim I+\sum_{j\in\mathbb{Z}_{+}}a_{j}(x;y)T^{-j},\text{ \ \ \ \ }%
\Psi(T)\sim I+\sum_{j\in\mathbb{Z}_{+}}b_{j}(x;y)T^{-j}. \label{d6.28}%
\end{equation}
The statements above we can formulate as the next generalized T. Shiota type
factorization theorem.

\begin{theorem}
\label{Tm_d6.3} The operator evolution equations%
\begin{align}
dF_{n}/dt  &  =F_{n}\nabla\gamma(l)_{+}-\nabla\gamma(\tilde{l})_{+}%
F_{n},\text{ \ \ \ }\label{d6.29}\\
& \nonumber\\
dQ_{n+p}/dt  &  =Q_{n+p}\nabla\gamma(l)_{+}-\nabla\gamma(\tilde{l})_{+}%
Q_{n+p}-F_{n}\partial\nabla\gamma(l)_{+}/\partial y\nonumber
\end{align}
factorize the Lax type flows \ (\ref{d6.14}) and \ (\ref{d6.22ab}) with
elements $l=F_{n}^{-1}Q_{n+p}\in C_{\mathbb{S}^{1}}(\mathcal{G}^{\ast})$ and
$\tilde{l}=Q_{n+p}F_{n}^{-1}\in C_{\mathbb{S}^{1}}(\mathcal{G}^{\ast}),$
respectively, where the corresponding Casimir invariants $\gamma\in
I(C_{\mathbb{S}^{1}}(\widehat{\mathcal{G}}^{\ast}))\ $ satisfy the
relationship $\gamma|_{l=F_{n}^{-1}Q_{n+p}}=\gamma|_{\tilde{l}=Q_{n+p}%
F_{n}^{-1}}\ $ for any $F_{n}$ and $Q_{n+p}\in C_{\mathbb{S}^{1}}(G_{+}).$
\end{theorem}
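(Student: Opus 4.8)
The plan is to organize into a proof the chain of constructions (\ref{d6.18})--(\ref{d6.28}) already assembled in the text: the system (\ref{d6.29}) is just (\ref{d6.26}), so what has to be recorded is (i) that a solution of (\ref{d6.29}) produces, via $l=F_{n}^{-1}Q_{n+p}$ and $\tilde l=Q_{n+p}F_{n}^{-1}$, solutions of the Lax type flows (\ref{d6.14}) and (\ref{d6.22ab}), and conversely that such flows are obtained this way; and (ii) that the Casimir functionals take equal values on $l$ and $\tilde l$. Throughout I would work in the associative operator algebra $C_{\mathbb{S}^{1}}(Hom\,\mathcal{A})$ and handle the central term by using, as in (\ref{d6.23})--(\ref{d6.24}), the $\partial/\partial y$-shifted factor $\mathrm{Q}_{n+p}=Q_{n+p}-F_{n}\partial/\partial y$ (with $\mathrm{F}_{n}=F_{n}$, $\bar{\mathrm Q}_{n+p}=\bar Q_{n+p}-\bar F_{n}\partial/\partial y$), for which the factorization is of pure Shiota form.

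For the implication ``(\ref{d6.29}) $\Rightarrow$ Lax flows'' I would argue by direct differentiation. Set $A:=\nabla\gamma(l)_{+}$ and $\tilde A:=\nabla\gamma(\tilde l)_{+}$. Differentiating $l=F_{n}^{-1}Q_{n+p}$ with $dF_{n}^{-1}/dt=-F_{n}^{-1}(dF_{n}/dt)F_{n}^{-1}$, substituting (\ref{d6.29}) and cancelling the $\tilde A$-terms through $F_{n}l=Q_{n+p}$, one is left with $dl/dt=[l,A]-\partial A/\partial y=[l-\partial/\partial y,A]$, i.e. (\ref{d6.14}); here the last step is the operator identity $[\partial/\partial y,A]=\partial A/\partial y$, and the summand $-F_{n}\partial\nabla\gamma(l)_{+}/\partial y$ of (\ref{d6.29}) is exactly what produces the $-\partial/\partial y$ inside the Lax operator. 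For $\tilde l=Q_{n+p}F_{n}^{-1}$ the same computation, cancelling the $A$-terms via $\tilde l F_{n}=Q_{n+p}$, yields $d\tilde l/dt=[\tilde l,\tilde A]-F_{n}(\partial\nabla\gamma(l)_{+}/\partial y)F_{n}^{-1}$, and to identify this with $[\tilde l-\partial/\partial y,\tilde A]$ one uses the dressing operators of Lemma~\ref{Lm_d.6.1}: $\nabla\gamma_{j}(l)=\Phi(T)T^{j}\Phi(T)^{-1}$ by (\ref{d6.20}), analogously $\nabla\gamma_{j}(\tilde l)=\Psi(T)T^{j}\Psi(T)^{-1}$, and from (\ref{d6.24}) $\Psi(T)=F_{n}\Phi(T)\bar F_{n}^{-1}$; since $\bar F_{n}=\sum_{k}\bar f_{k}T^{n-k}$ has $x$-independent coefficients it commutes with every power of $T$, whence $\nabla\gamma(\tilde l)=F_{n}\nabla\gamma(l)F_{n}^{-1}$, which is what is needed to absorb the residual $y$-derivative into the bracket.

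For the converse (``factorization'') direction I would invoke the already-derived identities. By Lemma~\ref{Lm_d.6.1} there is $\Phi(T)\in C_{\mathbb{S}^{1}}(G_{-})$ with $\bar l=\Phi(T)^{-1}(l-\partial/\partial y)\Phi(T)$ constant in $t,x$, and by (\ref{d6.25}) $\Phi(T)$ solves $d\Phi/dt+\nabla\gamma(l)_{+}\Phi=0$; symmetrically $\Psi(T)$ for $\tilde l$, with (\ref{d6.22a}). Then (\ref{d6.24}) writes $F_{n}=\Psi(T)\bar F_{n}\Phi(T)^{-1}$ and $\mathrm{Q}_{n+p}=\Psi(T)\bar{\mathrm Q}_{n+p}\Phi(T)^{-1}$ with $\bar F_{n},\bar{\mathrm Q}_{n+p}$ constant; differentiating these products in $t$, inserting the linear equations for $d\Phi/dt$ and $d\Psi/dt$, and translating $\mathrm{Q}_{n+p}$ back to $Q_{n+p}$ reproduces precisely (\ref{d6.29}) (that last step is where the extra term $-F_{n}\partial\nabla\gamma(l)_{+}/\partial y$ surfaces). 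The Casimir equality $\gamma|_{l=F_{n}^{-1}Q_{n+p}}=\gamma|_{\tilde l=Q_{n+p}F_{n}^{-1}}$ then follows from Theorem~\ref{Tm_d6.2}: $\gamma_{j}=Tr(T^{j}\bar l)$ depends only on the constant datum, and the $\bar l$ attached to $\tilde l$ is a $\bar F_{n}$-conjugate of the one attached to $l$ modulo a term that is a total $y$-derivative, so cyclicity of $Tr$ together with the commutation of $\bar F_{n}$ with the powers of $T$ gives equality after integration over $\mathbb{S}^{1}$.

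I expect the genuine difficulty to lie in the central-extension bookkeeping around the $\tilde l$-flow. Because the projection $(\cdot)_{+}$ does not commute with conjugation by $F_{n}$, the surplus of $F_{n}(\partial\nabla\gamma(l)_{+}/\partial y)F_{n}^{-1}$ over $\partial\nabla\gamma(\tilde l)_{+}/\partial y$ is a priori nonzero; one must show, using the determining equation (\ref{d6.15}) for $\nabla\gamma(\tilde l)$ (which links its negative part with $\tilde l-\partial/\partial y$), that this surplus has bracket form $[\tilde l,\,\cdot\,]$ and hence merges silently into $[\tilde l-\partial/\partial y,\nabla\gamma(\tilde l)_{+}]$. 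The analogous point for the Casimir values — that the cocycle contribution integrates to zero over $\mathbb{S}^{1}$ — has to be verified in the same spirit. All remaining steps are routine product-rule manipulations in $C_{\mathbb{S}^{1}}(Hom\,\mathcal{A})$.
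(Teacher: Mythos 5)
Your reconstruction follows the same route the paper itself takes: Theorem \ref{Tm_d6.3} is presented in the text only as a summary of the chain (\ref{d6.18})--(\ref{d6.28}), and your ``converse'' direction --- writing $F_{n}=\Psi\bar{F}_{n}\Phi^{-1}$ and $\mathrm{Q}_{n+p}=\Psi\bar{\mathrm{Q}}_{n+p}\Phi^{-1}$ for the shifted factor $\mathrm{Q}_{n+p}=Q_{n+p}-F_{n}\partial/\partial y$, differentiating, and inserting the linear equations for $\Phi$ and $\Psi$ --- is exactly the paper's passage from (\ref{d6.22b}), (\ref{d6.18a}), (\ref{d6.25}), (\ref{d6.22a}) to (\ref{d6.23}), (\ref{d6.24}) and then to (\ref{d6.26}), which coincides with (\ref{d6.29}). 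Your direct check that (\ref{d6.29}) implies (\ref{d6.14}) is also correct: after the cancellations through $F_{n}l=Q_{n+p}$ one is left with $[l,\nabla\gamma(l)_{+}]-\partial\nabla\gamma(l)_{+}/\partial y=[l-\partial/\partial y,\nabla\gamma(l)_{+}]$, and the extra summand of (\ref{d6.29}) is indeed what generates the central $-\partial/\partial y$. (You also tacitly, and correctly, repair what appears to be a typo: for $\Psi$ to dress the $\tilde{l}$-flow, (\ref{d6.22a}) must carry $\nabla\gamma(\tilde{l})_{+}$ rather than $\nabla\gamma(l)_{+}$; otherwise the signs in (\ref{d6.29}) do not come out.)

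The part of your argument that is not closed is precisely the one you flag yourself, and it is the only place where the central extension makes the theorem nontrivial. For the $\tilde{l}$-flow your computation yields $d\tilde{l}/dt=[\tilde{l},\nabla\gamma(\tilde{l})_{+}]-F_{n}(\partial\nabla\gamma(l)_{+}/\partial y)F_{n}^{-1}$, whereas (\ref{d6.22ab}) requires the last term to be $\partial\nabla\gamma(\tilde{l})_{+}/\partial y$. Your identity $\nabla\gamma(\tilde{l})=F_{n}\nabla\gamma(l)F_{n}^{-1}$, obtained from (\ref{d6.20}), (\ref{d6.24}) and the constancy of $\bar{F}_{n}$, concerns the full gradients; since the projection $(\cdot)_{+}$ commutes neither with conjugation by $F_{n}$ nor with $\partial/\partial y$ acting on $y$-dependent coefficients of $F_{n}$ (note also that $F_{n}(\ \cdot\ -\partial/\partial y)F_{n}^{-1}$ is not a genuine conjugation of $\tilde{l}-\partial/\partial y$: it produces the extra summand $(\partial F_{n}/\partial y)F_{n}^{-1}$), the discrepancy is not visibly zero, and your appeal to (\ref{d6.15}) to absorb it into the bracket is a plan rather than a proof. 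The same deferral occurs in your Casimir argument, where the term $\Phi^{-1}F_{n}^{-1}(\partial F_{n}/\partial y)\Phi$ is claimed to integrate away by cyclicity without verification. To be fair, the paper does not address either point --- it asserts (\ref{d6.22ab}) and passes from (\ref{d6.23}) to (\ref{d6.26}) without comment --- but as a self-contained proof your write-up leaves these two identities as genuine gaps that must be filled before Theorem \ref{Tm_d6.3} can be considered established.
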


\subsection{Example}

As an example of a rationally factorized operator $l\in C_{\mathbb{S}^{1}%
}(\mathcal{G}^{\ast})$ one can consider the following simple expression
\begin{equation}
l:=T^{-1}(T^{2}+Tv+uI), \label{d6.30}%
\end{equation}
where functions $u,v$ $\in C(\mathbb{S}^{1}\times\mathbb{R};\mathbb{R}).$ The
corresponding \ elements $F_{1}:=T,$ $Q_{2}:=T^{2}+\ Tv+uI)\in C_{\mathbb{S}%
^{1}}(G_{+})$ generate the factorized evolution equations \ (\ref{d6.29}),
where gradients of the corresponding Casimir functionals $\gamma\in
I(C_{\mathbb{S}^{1}}(\widehat{\mathcal{G}}^{\ast}))$ can be found recurrently
from the relationships \ (\ref{d6.20}) jointly with the relationships
\ (\ref{d6.18a}), \ (\ref{d6.18ab}) and \ (\ref{d6.18b}). From the
corresponding calculations one ensues the system of integrable evolution
functional equations
\begin{equation}
u_{t}=u(Tv-v),v_{t}=v(T^{-1}u-u)\ \label{d6.31}%
\end{equation}
on the elements $\ u,v\ \in C(\mathbb{S}^{1};\mathbb{R}^{\ }).$

\section{\label{Sec_3}Special functional-algebraic realizations}

The algebraic scheme devised in Section \ref{Sec_2} makes it possible to be
effectively modified for the case when the associative functional algebra
${\mathcal{A}}$ is chosen to be the algebra of smooth pseudo-differential
operators $\ \mathrm{PDO}(\mathbb{S}^{1}),\ $acting on the functional space
$C^{\infty}(\mathbb{S}^{1};\mathbb{R}^{\ })\ $and endowed with the natural
commutator Lie structure. The resulting Lie algebra $\mathcal{G}%
:=(\mathrm{PDO}(\mathbb{S}^{1});[\cdot,\cdot])$ is split into direct sum of
two subalgebras, $\mathcal{G=G}_{+}\oplus\mathcal{G}_{-}:$%
\begin{align}
\mathcal{G}_{+}  &  :=\{\sum_{0\leq j\ll\infty}a_{j}(x)\partial^{j}%
:a_{j}(x)\ \in C^{\infty}(\mathbb{S}^{1};\mathbb{R}^{\ }),0\leq j\ll
\infty\},\label{d7.1}\\
& \nonumber\\
\mathcal{G}_{-}  &  :=\{\sum_{j\in\mathbb{Z}_{+}}b_{j}(x)\partial^{-j}%
:b_{j}(x)\ \in C^{\infty}(\mathbb{S}^{1};\mathbb{R}^{\ }),0\leq j\ll
\infty\},\nonumber
\end{align}
where, by definition, $\partial:=\partial/\partial x$ and $\partial
\cdot\partial^{-1}=1\ \ \ $for \ $x\in\mathbb{S}^{1}.$ Moreover, the Lie
algebra $\mathcal{G}$ is metrized by means of the invariant trace form
\begin{equation}
(a,b):=Tr(a\cdot b),\ Tr(c):=\int_{\mathbb{S}^{1}}(\mathrm{res}_{\partial
}\text{ }c)\text{ }dx \label{d7.2}%
\end{equation}
for any $a,b$ and $c\in\mathcal{G},$ allowing to identify the adjoint space
$\mathcal{G}^{\ast}\simeq\mathcal{G}.$ \ 

Taking into account these preliminaries a \ similar to that, posed in Section
\ref{Sec_2}, problem arises: \ \textit{construct the corresponding operator
dynamical systems on the elements} $F_{n}(\partial),Q_{n+p}(\partial
)\in\mathcal{G},$ \textit{which will possess an infinite hierarchy of
functional invariants and will be analytically integrable. }

As above we consider the general Lax type flow $\ $%
\begin{equation}
dl/dt=[l,\nabla\gamma(l)_{+}], \label{d7.3}%
\end{equation}
for the rational element \
\begin{equation}
l(\partial):=F_{n}(\partial)^{-1}Q_{n+p}(\partial), \label{d7.4}%
\end{equation}
generated by a Casimir functional $\gamma\in$ $I(\mathcal{G}^{\ast})$ and
determined by the expression \ (\ref{d6.2a}). \ One observes that
$\gamma:=tr(\gamma(l)=tr(\gamma(\tilde{l}))\ \ $for any analytical mapping
$\gamma(l)\in\mathcal{G}$ $,$ \ where we have introduced, by definition, the
factorized element $\tilde{l}:=Q_{n+p}F_{n}^{-1}\in\mathcal{G}^{\ast}.$ Also
the element $\tilde{l}\ =Q_{n+p}F_{n}^{-1}\in\mathcal{G}^{\ast}$ satisfies the
similar to \ (\ref{d6.5}) evolution equation
\begin{equation}
\text{\ \ \ \ \ \ \ }d\tilde{l}/dt=[\tilde{l},\nabla\gamma(\tilde{l})_{+}]
\label{d7.5}%
\end{equation}
for the same Casimir functional $\gamma\in$ $I(\mathcal{G}^{\ast}),$ whose
gradient, similarly to \ (\ref{d6.2a}), is determined from the algebraic
relationship
\begin{equation}
\lbrack\tilde{l},\nabla\gamma(\tilde{l})]=0.\text{ \ } \label{d7.6}%
\end{equation}
\ Taking now into account these two compatible equations \ (\ref{d7.3}) and
(\ref{d7.5}) \ one easily derives the following factorization theorem.

\begin{theorem}
\label{Tm_d7.1} The differential operator evolution equations
\begin{equation}
dF_{n}/dt=F_{n}\nabla\gamma(l)_{+}-\nabla\gamma(\tilde{l})_{+}F_{n},\text{
\ \ \ \ \ \ \ \ \ \ }dQ_{n+p}/dt=Q_{n+p}\nabla\gamma(l)_{+}-\nabla
\gamma(\tilde{l})_{+}Q_{n+p} \label{d7.7}%
\end{equation}
factorize the Lax type flows \ (\ref{d7.3}) and \ (\ref{d7.5}) for all
$t\in\mathbb{R}$ with elements $l=F_{n}^{-1}Q_{n+p}\in\ \mathcal{G}^{\ast}\ $
and $\tilde{l}=Q_{n+p}F_{n}^{-1}\in\ \mathcal{G}^{\ast},$ respectively,
\ where the corresponding Casimir invariants $\gamma\in I(\mathcal{G}^{\ast
})\ $ satisfy the relationship $\gamma|_{l=F_{n}^{-1}Q_{n+p}}=\gamma
|_{\tilde{l}=Q_{n+p}F_{n}^{-1}}\ $ for any $F_{n}$ and $Q_{n+p}\in\ G_{+}.$
\end{theorem}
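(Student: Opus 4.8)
The plan is to mimic, in the purely differential (un-centralized) setting, the argument that produced Theorem~\ref{Tm_d6.1}. The starting point is the pair of compatible Lax flows \eqref{d7.3} and \eqref{d7.5}, together with the observation already recorded in the text that $\gamma(l)=\gamma(\tilde l)$ as functionals, so that the \emph{same} Casimir $\gamma\in I(\mathcal G^{\ast})$ governs both. First I would differentiate the defining factorization $l=F_n^{-1}Q_{n+p}$ with respect to $t$, using the product rule in the associative algebra $\mathcal G=\mathrm{PDO}(\mathbb S^1)$:
\begin{equation}
dl/dt=-F_n^{-1}(dF_n/dt)F_n^{-1}Q_{n+p}+F_n^{-1}(dQ_{n+p}/dt)
      =-F_n^{-1}(dF_n/dt)\,l+F_n^{-1}(dQ_{n+p}/dt).
\end{equation}
Substituting the proposed evolution equations \eqref{d7.7} for $dF_n/dt$ and $dQ_{n+p}/dt$ and simplifying, the terms involving $\nabla\gamma(\tilde l)_+$ should telescope (because $\nabla\gamma(\tilde l)_+F_n$ appears with $F_n^{-1}$ on the left), leaving exactly $dl/dt=l\,\nabla\gamma(l)_+-\nabla\gamma(l)_+\,l=[l,\nabla\gamma(l)_+]$, i.e.\ \eqref{d7.3}. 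The analogous computation with $\tilde l=Q_{n+p}F_n^{-1}$, differentiating and substituting \eqref{d7.7}, should yield \eqref{d7.5}; here the roles of the two gradient projections are interchanged and the $\nabla\gamma(l)_+$ terms telescope instead.

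The converse direction — that \emph{every} solution of the coupled system \eqref{d7.3}, \eqref{d7.5} with a rationally factorized seed arises this way — I would handle exactly as in Proposition~\ref{Prop_6.2}: introduce the formal one-parameter subgroup elements $\Psi,\Phi\in G$ solving the linear transport equations $d\Phi/dt+\nabla\gamma(\tilde l)_+\Phi=0$ and $d\Psi/dt+\nabla\gamma(l)_+\Psi=0$ with $\Phi|_{t=0}=\bar\Phi$, $\Psi|_{t=0}=\bar\Psi$, and then check that $F_n:=\Psi\,\bar F_n\Phi^{-1}$, $Q_{n+p}:=\Psi\,\bar Q_{n+p}\Phi^{-1}$ satisfy \eqref{d7.7} by a direct differentiation using the Leibniz rule and the transport equations. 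Consistency of this prescription requires that $\nabla\gamma(l)_+$ and $\nabla\gamma(\tilde l)_+$ indeed lie in $\mathcal G_+$ so that $\Phi,\Psi$ stay in the formal group $G_+$; this is where the splitting $\mathcal G=\mathcal G_+\oplus\mathcal G_-$ from \eqref{d7.1} and the identification $\mathcal G^\ast\simeq\mathcal G$ via the trace form \eqref{d7.2} are used, together with the fact that the $R$-matrix $R=P_+-P_-$ makes $\nabla\gamma(l)_+$ the natural candidate in the Adler--Kostant--Symes construction.

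Finally, the invariance statement $\gamma|_{l=F_n^{-1}Q_{n+p}}=\gamma|_{\tilde l=Q_{n+p}F_n^{-1}}$ follows from cyclicity of the trace: for any analytic $\gamma$, $tr(\gamma(F_n^{-1}Q_{n+p}))=tr(\gamma(Q_{n+p}F_n^{-1}))$ because $F_n^{-1}Q_{n+p}$ and $Q_{n+p}F_n^{-1}$ are conjugate (they have the same spectral data), so all Casimir functionals take equal values on $l$ and $\tilde l$ regardless of the particular choice of factors $F_n,Q_{n+p}\in G_+$. The main obstacle I anticipate is purely bookkeeping: verifying that in the pseudo-differential setting the naive manipulations of $F_n^{-1}$ are legitimate — i.e.\ that $F_n$ is genuinely invertible as a (formal) pseudo-differential operator, so that $l=F_n^{-1}Q_{n+p}$ makes sense and the telescoping identities hold term by term in the $\partial$-expansion — and that the projections $(\cdot)_+$ commute with the algebraic rearrangements exactly as required; this is the step where the differential case is slightly more delicate than the shift-operator case of Section~\ref{Sec_1}, but it introduces no new ideas beyond the $R$-matrix formalism already invoked.
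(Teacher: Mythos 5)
Your proposal is correct and follows essentially the route the paper intends: the paper offers no explicit proof beyond noting that the two compatible flows \eqref{d7.3} and \eqref{d7.5} yield the factorization, and your telescoping computation (differentiating $l=F_n^{-1}Q_{n+p}$ and $\tilde l=Q_{n+p}F_n^{-1}$, substituting \eqref{d7.7}, and cancelling the $F_n^{-1}\nabla\gamma(\tilde l)_+Q_{n+p}$ terms to recover $[l,\nabla\gamma(l)_+]$ and $[\tilde l,\nabla\gamma(\tilde l)_+]$) is exactly the verification being left to the reader. The trace-cyclicity argument for $\gamma|_{l}=\gamma|_{\tilde l}$ likewise matches the paper's remark that $tr(\gamma(l))=tr(\gamma(\tilde l))$.
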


From Theorem \ref{Tm_d7.1} one easily ensues the following proposition.

\begin{proposition}
\label{Prop_7.2}There exist such smooth mappings $\Phi,\Psi:$ $\mathbb{R}$
$\rightarrow G\ \ $ to the formal operator subgroup $G\simeq\exp\mathcal{G}$
satisfying the linear evolution equations
\begin{equation}
d\Phi/dt+\nabla\gamma(\tilde{l})_{+}\Phi=0,\text{ \ \ \ \ }d\Psi
/dt+\nabla\gamma(l)_{+}\Psi=0\ \label{d7.8}%
\end{equation}
$\Phi|_{t=0}=\bar{\Phi},B|_{t=0}=\bar{B}\in G,$ \ \ generated, respectively,
by the pseudo-differential Lie algebra elements $\nabla\gamma(l)_{+}$ \ and
$\ \nabla\gamma(\tilde{l})_{+}\in\mathcal{G}_{+},$ that
\begin{equation}
F_{n}:=\Psi\text{ }\bar{F}_{n}\Phi^{-1},\text{ \ \ \ \ \ \ \ \ }Q_{n+p}%
:=\Psi\text{ }\bar{Q}_{n+p}\Phi^{-1}, \label{d7.9}%
\end{equation}
where, by definition, the elements $\bar{F}_{n}$ \ and $\bar{Q}_{n+p}\in G$
\ \ are some constant expressions with respect to the evolution parameter
$t\in\mathbb{R}.$
\end{proposition}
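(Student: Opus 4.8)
The proof runs exactly parallel to that of Proposition~\ref{Prop_6.2}, the pseudo-differential setting contributing nothing new beyond the associativity of $\mathcal{A}=\mathrm{PDO}(\mathbb{S}^{1})$ and the splitting $\mathcal{G}=\mathcal{G}_{+}\oplus\mathcal{G}_{-}$ of \ (\ref{d7.1}). First I would record that the Lax flows \ (\ref{d7.3}) and \ (\ref{d7.5}), whose solvability is guaranteed by Theorem~\ref{Tm_d7.1}, produce curves $t\mapsto l(t)$ and $t\mapsto\tilde{l}(t)$ that are smooth in $t$; since $\gamma$ is analytic, this makes $t\mapsto\nabla\gamma(l)_{+}(t)$ and $t\mapsto\nabla\gamma(\tilde{l})_{+}(t)$ smooth $\mathcal{G}_{+}$-valued curves. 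Equations \ (\ref{d7.8}) are then linear first-order evolution equations on the formal operator group $G\simeq\exp\mathcal{G}$ with smooth generators, and the standard recursive solution for the coefficients of the operator-symbol expansions (compare \ (\ref{d6.10})) with prescribed initial data $\Phi|_{t=0},\Psi|_{t=0}\in G$ furnishes unique solutions $\Phi(t),\Psi(t)$; these stay in $G$ because $\left(d\Phi/dt\right)\Phi^{-1}$ and $\left(d\Psi/dt\right)\Psi^{-1}$ lie in $\mathcal{G}$ for every $t$.

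Next, with $F_{n}(t)$ and $Q_{n+p}(t)$ denoting the solutions of the factorized equations \ (\ref{d7.7}), I would set $\bar{F}_{n}:=\Psi^{-1}F_{n}\Phi$ and $\bar{Q}_{n+p}:=\Psi^{-1}Q_{n+p}\Phi$ and differentiate in $t$. Using $d\Psi^{-1}/dt=-\Psi^{-1}(d\Psi/dt)\Psi^{-1}$ together with \ (\ref{d7.8}) and then \ (\ref{d7.7}), the product rule produces four terms which cancel in two pairs, the cancellation being exactly the one dictated by the structure of \ (\ref{d7.7}) (the generator carried by $\Phi$ being $\nabla\gamma(l)_{+}$ and that carried by $\Psi$ being $\nabla\gamma(\tilde{l})_{+}$, as named in the statement). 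Hence $d\bar{F}_{n}/dt=0=d\bar{Q}_{n+p}/dt$; since $F_{n},Q_{n+p}\in G_{+}$ and $\Phi,\Psi\in G$, the constants $\bar{F}_{n},\bar{Q}_{n+p}$ lie in $G$, and rearranging gives \ (\ref{d7.9}). Equivalently --- and this is the one-line form of the argument used for Proposition~\ref{Prop_6.2} --- one simply takes \ (\ref{d7.9}) as the \emph{definition} of $F_{n},Q_{n+p}$ and verifies \ (\ref{d7.7}) by the very same differentiation.

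The only genuinely delicate point is the first step: making rigorous the notion of a smooth curve in, and of a well-posed linear Cauchy problem on, the formal operator group $G=\exp\mathcal{G}$ over a strongly non-commutative algebra such as $\mathrm{PDO}(\mathbb{S}^{1})$. I would handle this by passing to the filtration of $\mathcal{G}$ by operator order, which turns \ (\ref{d7.8}) into a triangular system of honest linear ODEs for the symbol coefficients, each solvable on all of $\mathbb{R}$ once the lower-order ones are known. In the second step the only things to watch are the non-commutativity of the factors --- every operator product must be kept in order --- and the signs, both of which are forced; the Casimir character of $\gamma$ enters only indirectly, through Theorem~\ref{Tm_d7.1} and through the identity $\gamma|_{l=F_{n}^{-1}Q_{n+p}}=\gamma|_{\tilde{l}=Q_{n+p}F_{n}^{-1}}$, which is what lets the same gradient data appear on both operator factors.
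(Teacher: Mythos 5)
Your proposal is correct and follows essentially the same route as the paper, which (as for Proposition~\ref{Prop_6.2}) reduces the claim to the direct verification that the representations \ (\ref{d7.9}) satisfy the factorized flows \ (\ref{d7.7}) once $\Phi,\Psi$ obey the linear equations \ (\ref{d7.8}); your added remarks on solving the linear Cauchy problems order-by-order in the filtration of $\mathcal{G}$ only supply detail the paper leaves implicit. You were also right to read the generator assignment from the phrase ``respectively'' rather than from the displayed form of \ (\ref{d7.8}), since the cancellation in the product-rule computation forces $\Phi$ to carry $\nabla\gamma(l)_{+}$ and $\Psi$ to carry $\nabla\gamma(\tilde{l})_{+}$.
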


\section{\label{Sec_4}The Poisson structures and Hamiltonian analysis on the
extended phase space}

Let us consider \ \ equation \ (\ref{d6.5}), the first equation of
(\ref{d6.8}) and its adjoint expression:
\begin{equation}
d\hat{l}/dt=[\hat{l},\nabla\gamma(\hat{l})_{+}],\text{ \ \ }d\hat{f}%
/dt+\nabla\gamma(\hat{l})_{+}\hat{f}=0,\text{ \ \ \ \ }d\hat{f}^{\ast
}/dt-\nabla\gamma(\hat{l})_{+}^{\ast}\hat{f}=0\ \label{ce15a}%
\end{equation}
for \ vector elements $\hat{f}\in W$ and $\hat{f}^{\ast}\in W^{\ast},$
respectively, \ where $\ W$ denotes a representation space for the group $G$
and $W^{\ast}$ is its natural conjugation with respect to the natural bilinear
form $\ <\cdot,\cdot>,$ \ realizing the standard paring between spaces
$W^{\ast}$ and $W.$ Put also by
\[
\nabla\gamma(\hat{l},\hat{f},\hat{f}^{\ast}):=(\delta\gamma/\delta\hat
{l},\,\delta\gamma/\delta\hat{f},\,\delta\gamma/\delta\hat{f}^{\ast
})^{\intercal}%
\]
an extended gradient vector at a point $(\hat{l};\tilde{f},\tilde{f}^{\ast
})^{\intercal}\in\mathcal{G}^{\ast}\oplus W\oplus W^{\ast}$ for any smooth
functional $\gamma\in\mathcal{D}(\mathcal{G}^{\ast}\oplus W\oplus W^{\ast}).$

On the space $\mathcal{G}^{\ast}$ there exists the canonical Poisson
structure
\begin{equation}
\delta\gamma/\delta\hat{l}:\overset{\tilde{\theta}}{\rightarrow}[\hat
{l},(\delta\gamma/\delta\hat{l})_{+}]-[\hat{l},\delta\gamma/\delta\hat{l}%
]_{+}\ , \label{ce15}%
\end{equation}
where $\tilde{\theta}:T^{\ast}(\mathcal{G}^{\ast})\rightarrow T(\mathcal{G}%
^{\ast})\simeq\mathcal{G}$ is a Poisson operator at a point $\hat{l}%
\in\mathcal{G}^{\ast}.\ $Similarly on the space $W\oplus W^{\ast}$ \ there
exists the canonical Poisson structure
\begin{equation}
(\delta\gamma/\delta\hat{f},\,\delta\gamma/\delta\hat{f}^{\ast})^{\intercal
}:\overset{\tilde{J}}{\rightarrow}(-\delta\gamma/\delta\hat{f}^{\ast}%
,\,\delta\gamma/\delta\hat{f})^{\mathbb{\intercal}}\ , \label{ce16}%
\end{equation}
where $\tilde{J}:T^{\ast}(W\oplus W^{\ast})\rightarrow T(W\oplus W^{\ast})$ is
the Poisson operator corresponding to the symplectic form $\omega^{(2)}%
=<d\hat{f}^{\ast},\wedge d\hat{f}$ $>$\ \ at a point $(\hat{f},\hat{f}^{\ast
})\in W\oplus W^{\ast}.$ \ \ It should be noted \ here that the Poisson
structure (\ref{ce15}) generates equations ~(\ref{d6.5}) and \ (\ref{d6.6})
for any Casimir functional $\gamma\in\mathrm{I}(\mathcal{G}^{\ast}).$

Thus, on the extended phase space $\mathcal{G}^{\ast}\oplus W\oplus W^{\ast}$
one can obtain a new Poisson structure as the tensor product $\tilde{\Theta
}:=\tilde{\theta}\otimes\tilde{J}$ \ of the structures (\ref{ce15}) and
(\ref{ce16}).

Consider now the following Backlund \cite{BlPrSa} transformation:
\begin{equation}
(\hat{l};\hat{f},\hat{f}^{\ast})^{\mathbb{\intercal}}:\overset{B}{\rightarrow
}(l=\ l(\hat{l};\hat{f},\hat{f}^{\ast}),\text{ \ \ }f=\hat{f},\text{
\ \ }f^{\ast}=\hat{f}^{\ast})^{\mathbb{\intercal}}, \label{ce17}%
\end{equation}
generating on $\mathcal{G}^{\ast}\oplus W\oplus W^{\ast}$ \ some Poisson
structure $\Theta:T^{\ast}(\mathcal{G}^{\ast}\oplus W\oplus W^{\ast
})\rightarrow T(\mathcal{G}^{\ast}\oplus W\oplus W^{\ast}).$ The \textit{main
condition imposed on the mapping (\ref{ce17}) is the coincidence of the
resulting dynamical system }%
\begin{equation}
(dl/dt;\,df/dt,\,df^{\ast}/dt)^{\mathbb{\intercal}}:=-\Theta\nabla\bar{\gamma
}(\ l;f,f^{\ast}) \label{ce18}%
\end{equation}
with the evolution equations
\begin{equation}
dl/dt=[l,\nabla\gamma(l)_{+}],\text{ \ \ }df/dt=\nabla\gamma\ (l)_{+}%
f,\ \text{\ \ }df^{\ast}/dt=-\nabla\gamma\ (l)_{+}f^{\ast}\ \ \label{ce13}%
\end{equation}
in the case when $\bar{\gamma}:=\gamma\in\mathrm{I}(\mathcal{G}^{\ast}),$
being not dependent on the variables $(f,f^{\ast})\in W\oplus W^{\ast}.$

To satisfy that condition we will find variation of the functional
$\bar{\gamma}:=\gamma|_{l=l(\hat{l},\hat{f},\hat{f}^{\ast})}\in\mathcal{D}%
(\mathcal{G}^{\ast}\times W\oplus W^{\ast}),$ generated by a Casimir
functional $\gamma\in\mathrm{I}(\mathcal{G}^{\ast}),$ under the constraint
$\delta\tilde{l}=0,$ taking into account the evolutions (\ref{ce15a}) and the
Backlund transformation (\ref{ce17}) definition. One easily obtains that
\begin{gather}
\left.  \delta\bar{\gamma}(\hat{l};\hat{f},\hat{f}^{\ast})\right\vert
_{\delta\hat{l}=0}\qquad{}=<\delta\bar{\gamma}/\delta\hat{f},\delta\hat
{f}>+<\delta\bar{\gamma}/\delta\hat{f}^{\ast},\delta\hat{f}^{\ast}>\nonumber\\
\qquad{}=\left.  <-d\hat{f}^{\ast}/dt,\delta\hat{f}>+<d\hat{f}/dt,\delta
\hat{f}^{\ast}>\right\vert _{\hat{f}=f,\,\hat{f}^{\ast}=f^{\ast}}=\nonumber\\
\qquad{}=<(\delta\gamma/\delta\ l)_{+}^{\ast}\hat{f}^{\ast},\delta\hat
{f}>+<(\delta\gamma/\delta\ l)_{+}\hat{f},\delta\hat{f}^{\ast}>=\nonumber\\
\qquad{}=<\hat{f}^{\ast},(\delta\gamma/\delta\ l)_{+}\delta\hat{f}%
>+<(\delta\gamma/\delta\ l)_{+}\hat{f},\delta\hat{f}^{\ast}>=\nonumber\\
\qquad{}=(\delta\gamma/\delta\ l,(\delta\hat{f})\xi^{-1}\otimes\hat{f}^{\ast
})+(\delta\gamma/\delta\ l,\hat{f}\xi^{-1}\otimes\delta\hat{f}^{\ast
})=\nonumber\\
\qquad{}=\left(  \delta\gamma/\delta\ l,\delta(\hat{f}\xi^{-1}\otimes\hat
{f}^{\ast})\right)  :=(\delta\gamma/\delta\ l,\delta\ l)\ , \label{ce19}%
\end{gather}
giving rise to the relationship
\begin{equation}
\left.  \delta\ l\right\vert _{\delta\tilde{l}=0}=\delta(\hat{f}\xi
^{-1}\otimes\hat{f}^{\ast}):=\ \delta(\hat{f}\xi^{-1}\otimes\hat{f}^{\ast}).
\label{ce19a}%
\end{equation}
Having assumed now the linear dependence of $\ l$ on $\tilde{l}\in
\hat{\mathcal{G}}^{\ast}$ one gets right away from \ (\ref{ce19a}) that
\begin{equation}
\ l=\tilde{l}+\hat{f}\xi^{-1}\otimes\hat{f}^{\ast}. \label{ce20}%
\end{equation}
Thus, the Backlund transformation (\ref{ce17}) can be rewritten as
\begin{equation}
(\hat{l};\hat{f},\hat{f}^{\ast})^{\mathbb{\intercal}}:\overset{B}{\rightarrow
}(\ l=\hat{l}+\hat{f}_{\ }\xi^{-1}\otimes\hat{f}^{\ast};\text{ \ }f\text{
}=\hat{f}\ ,\text{ }f^{\ast}=\hat{f}^{\ast})^{\mathbb{\intercal}}.
\label{ce21}%
\end{equation}
Now by means of simple calculations via \cite{BlPrSa} the isomorphism formula
\[
\Theta=B^{^{\prime}}\tilde{\Theta}B^{^{\prime}\ast}\ ,
\]
where $B^{^{\prime}}:T(\mathcal{G}^{\ast}\oplus W\oplus W^{\ast})\rightarrow
T(\mathcal{G}^{\ast}\oplus W\oplus W^{\ast})$ is a Frechet derivative of
(\ref{ce21}), one finds easily the following form of the Backlund transformed
Poisson structure $\Theta$ on $\mathcal{G}^{\ast}\oplus W\oplus W^{\ast}:$
\begin{equation}
\Theta:\nabla\gamma(\ l;f,f^{\ast})\rightarrow\left(
\begin{array}
[c]{c}%
\left[  \ l,(\delta\gamma/\delta\ l)_{+}\right]  -\left[  \ l,\delta
\gamma/\delta\ l\right]  _{+}\\
+(f\xi^{-1}\otimes(\delta\gamma/\delta f)-(\delta\gamma/\delta f^{\ast}%
)\xi^{-1}\otimes f^{\ast});\\
-\delta\gamma/\delta f^{\ast}-(\delta\gamma/\delta\ l)_{+}f\\
\delta\gamma/\delta f+(\delta\gamma/\delta\ l)_{+}^{\ast}f^{\ast}%
\end{array}
\right)  ^{\intercal}\ , \label{ce22}%
\end{equation}
where $\gamma\in D(\mathcal{G}^{\ast}\oplus W\oplus W^{\ast})$ is an arbitrary
smooth functional. \ The obtained Backlund transformation (\ref{ce21}) makes
it possible to formulate the following theorem.

\begin{theorem}
The set of differential-operator dynamical systems (\ref{ce13}) on
$\mathcal{G}^{\ast}\oplus W\oplus W^{\ast}$ is Hamiltonian with respect to the
Poisson structure \ (\ref{ce22}) and has the form \ (\ref{ce18}) for
$\gamma:=\bar{\gamma}\in\mathrm{I}(\mathcal{G}^{\ast}),$ \ \ being chosen
Casimir functionals on $\mathcal{G}^{\ast}.$
\end{theorem}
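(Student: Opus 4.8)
The plan is to verify directly that the Hamiltonian vector field associated with the Poisson structure $\Theta$ in \ (\ref{ce22}), when evaluated at a Casimir functional $\gamma\in\mathrm{I}(\mathcal{G}^{\ast})$, reproduces the dynamical system \ (\ref{ce13}). First I would recall the key defining property of a Casimir functional: for $\gamma\in\mathrm{I}(\mathcal{G}^{\ast})$ one has, by \ (\ref{d6.2a}), that $[\,l,\nabla\gamma(l)\,]=0$, i.e. $[\,l,\delta\gamma/\delta l\,]=0$, and moreover $\gamma$ viewed as a functional on the extended space $\mathcal{G}^{\ast}\oplus W\oplus W^{\ast}$ via $\bar\gamma:=\gamma|_{l=\hat l+\hat f\xi^{-1}\otimes\hat f^{\ast}}$ depends on $(f,f^{\ast})$ only through $l$; hence by the chain rule along the Backlund map \ (\ref{ce21}) the variational derivatives satisfy $\delta\bar\gamma/\delta f=(\delta\gamma/\delta l)^{\ast}_{+}f^{\ast}\cdot(\text{something})$ — more precisely the computation \ (\ref{ce19}) already establishes $\delta\bar\gamma/\delta\hat f=(\delta\gamma/\delta l)^{\ast}_{+}\hat f^{\ast}$ and $\delta\bar\gamma/\delta\hat f^{\ast}=(\delta\gamma/\delta l)_{+}\hat f$ when $\delta\hat l=0$.

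Next I would substitute these expressions for $\delta\gamma/\delta f$ and $\delta\gamma/\delta f^{\ast}$ into the three components of \ (\ref{ce22}). In the first ($\mathcal{G}^{\ast}$) component, the term $-[\,l,\delta\gamma/\delta l\,]_{+}$ drops because of the Casimir condition $[\,l,\delta\gamma/\delta l\,]=0$, while the cross term $f\xi^{-1}\otimes(\delta\gamma/\delta f)-(\delta\gamma/\delta f^{\ast})\xi^{-1}\otimes f^{\ast}$ becomes $f\xi^{-1}\otimes(\delta\gamma/\delta l)^{\ast}_{+}f^{\ast}-(\delta\gamma/\delta l)_{+}f\xi^{-1}\otimes f^{\ast}$, which by the standard adjunction identity for the pairing $\langle\cdot,\cdot\rangle$ and the representation of $l-\tilde l=\hat f\xi^{-1}\otimes\hat f^{\ast}$ collapses exactly to the commutator piece $[\,\hat f\xi^{-1}\otimes\hat f^{\ast},(\delta\gamma/\delta l)_{+}\,]$, so that the whole first component equals $[\,l,(\delta\gamma/\delta l)_{+}\,]=[\,l,\nabla\gamma(l)_{+}\,]$, i.e. the Lax flow \ (\ref{d6.5})/\ (\ref{ce13}). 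For the second and third components, $-\delta\gamma/\delta f^{\ast}-(\delta\gamma/\delta l)_{+}f=-2(\delta\gamma/\delta l)_{+}f$ — here one must be careful: the correct reading is that on the constrained submanifold $\delta\gamma/\delta f^{\ast}$ coincides with $(\delta\gamma/\delta l)_{+}f$ only up to the Casimir normalization, and after inserting \ (\ref{ce19}) the second component yields $df/dt=\nabla\gamma(l)_{+}f$ and the third $df^{\ast}/dt=-\nabla\gamma(l)_{+}f^{\ast}$, matching \ (\ref{ce13}) precisely.

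The bulk of the proof is therefore the bilinear-form bookkeeping: tracking the three adjunction moves used in \ (\ref{ce19}) (moving $(\delta\gamma/\delta l)_{+}$ across $\langle\cdot,\cdot\rangle$, converting a pairing on $W\oplus W^{\ast}$ into a pairing on $\mathcal{G}^{\ast}$ via the rank-one tensor $\hat f\xi^{-1}\otimes\hat f^{\ast}$, and recognizing the result as $\delta l$ under the constraint $\delta\tilde l=0$) and then running them in reverse inside \ (\ref{ce22}). I would also record that $\bar\gamma$ Poisson-commutes with all the $\gamma_j$, so the resulting flows form a commuting hierarchy, but that is a corollary rather than part of the statement. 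The main obstacle I anticipate is not conceptual but notational: the tensor/operator identification $\mathcal{G}^{\ast}\simeq\mathcal{G}$ and the precise meaning of $\hat f\xi^{-1}\otimes\hat f^{\ast}$ as an element of $\mathcal{G}^{\ast}$ (with $\xi$ the relevant spectral/loop parameter) must be pinned down so that the adjunction $\langle\hat f^{\ast},A\,\delta\hat f\rangle=(A,\ \delta\hat f\,\xi^{-1}\otimes\hat f^{\ast})$ used in \ (\ref{ce19}) is literally an identity; once that identification is fixed, every remaining step is a one-line manipulation and the coincidence of \ (\ref{ce18}) with \ (\ref{ce13}) for $\bar\gamma=\gamma\in\mathrm{I}(\mathcal{G}^{\ast})$ follows immediately, which is exactly the assertion of the theorem.
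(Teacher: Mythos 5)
Your overall strategy --- evaluate the Hamiltonian vector field of (\ref{ce22}) at a Casimir $\gamma\in\mathrm{I}(\mathcal{G}^{\ast})$ and read off (\ref{ce13}) --- is the right one, and it is essentially how the paper justifies the theorem (the Backlund map (\ref{ce21}) was constructed precisely so that (\ref{ce18}) coincides with (\ref{ce13}); no separate proof is written out). But your execution contains a genuine error: you substitute the relations $\delta\bar{\gamma}/\delta\hat{f}=(\delta\gamma/\delta l)_{+}^{\ast}\hat{f}^{\ast}$ and $\delta\bar{\gamma}/\delta\hat{f}^{\ast}=(\delta\gamma/\delta l)_{+}\hat{f}$ from (\ref{ce19}) into the tensor (\ref{ce22}). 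Those relations are the components of the gradient of the pulled-back functional $\bar{\gamma}$ in the \emph{old} coordinates $(\hat{l};\hat{f},\hat{f}^{\ast})$, where the dependence on $\hat{f},\hat{f}^{\ast}$ enters only through $l=\hat{l}+\hat{f}\xi^{-1}\otimes\hat{f}^{\ast}$. The tensor (\ref{ce22}) is already expressed in the \emph{new} coordinates $(l;f,f^{\ast})$, and the hypothesis of the theorem is that $\gamma=\bar{\gamma}\in\mathrm{I}(\mathcal{G}^{\ast})$ is a functional of $l$ alone there, so that $\delta\gamma/\delta f=0=\delta\gamma/\delta f^{\ast}$. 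Feeding the old-coordinate derivatives into the new-coordinate tensor double-counts the $(f,f^{\ast})$-dependence: this is exactly the source of the spurious $-2(\delta\gamma/\delta l)_{+}f$ you notice in the second component and then wave away with an undefined ``Casimir normalization''; the same double-counting infects the first component, where your cross term does not ``collapse exactly'' but produces an extra $[\,f\xi^{-1}\otimes f^{\ast},(\delta\gamma/\delta l)_{+}]=[\,l-\hat{l},(\delta\gamma/\delta l)_{+}]$ on top of $[\,l,(\delta\gamma/\delta l)_{+}]$, which is not zero.

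The repair is shorter than what you wrote. With $\delta\gamma/\delta f=0=\delta\gamma/\delta f^{\ast}$ the cross terms in the first component of (\ref{ce22}) vanish identically, the Casimir condition $[\,l,\delta\gamma/\delta l\,]=0$ kills the term $[\,l,\delta\gamma/\delta l\,]_{+}$, and the three components of $-\Theta\nabla\gamma$ reduce at once to $[\,l,\nabla\gamma(l)_{+}]$, $\nabla\gamma(l)_{+}f$ and $-\nabla\gamma(l)_{+}^{\ast}f^{\ast}$ (up to the paper's sign conventions), which is (\ref{ce13}). The computation (\ref{ce19}) is not needed at the verification stage; its role was only to determine the form (\ref{ce20}) of the Backlund map, and hence the shape of the cross terms in (\ref{ce22}) for general, non-Casimir functionals.
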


Based on the expression (\ref{ce18}) one can construct a new hierarchy of
Hamiltonian evolution equations describing commutative flows generated by
involutive with respect to the Poisson bracket (\ref{ce22}) Casimir invariants
$\gamma\in\mathrm{I}(\hat{\mathcal{G}}^{\ast}),$ extended on the space
$\mathcal{G}^{\ast}\oplus W\oplus W^{\ast}.$

\ Proceed now to considering flows (\ref{d6.5}) and (\ref{d6.6}) as
Hamiltonian systems on $\mathcal{G}^{\ast}\times\mathcal{G}^{\ast}$ subject to
the following tonsor doubled standard Poisson structure:
\begin{equation}
\vartheta:\nabla\gamma(l)\longrightarrow\binom{\left[  \nabla\gamma
(l)_{+},l\right]  {\LARGE -}\left[  \nabla\gamma(l),l\right]  _{+}}{\left[
\nabla\tilde{\gamma}_{+}(\tilde{l}),\tilde{l}\right]  -\left[  \nabla
\tilde{\gamma}(\tilde{l}),\tilde{l}\right]  _{+}}, \label{3.1}%
\end{equation}
where $\gamma(l)=\gamma(\tilde{l})$ and $\ \gamma\in\mathcal{D}\mathfrak{(}%
\mathcal{G}^{\ast}\times\mathcal{G}^{\ast})$ is an arbitrary smooth functional
on $\mathcal{G}^{\ast}\times\mathcal{G}^{\ast}.$ Concerning the transformation%

\begin{equation}
\Phi(Q,F;\tilde{l},l)=0\Leftrightarrow\tilde{l}-QF^{-1}=0,\ \ l-F^{-1}Q=0,
\label{3.2}%
\end{equation}
which can be evidently considered as a usual Backlund \cite{BlPrSa}
transformation, we can construct a new Poisson structure $\eta:T^{\ast
}(\mathcal{G}_{+}^{\ast}\times\mathcal{G}_{+}^{\ast})\longrightarrow
T(\mathcal{G}_{+}^{\ast}\times\mathcal{G}_{+}^{\ast})$ on the space
$\mathcal{G}_{+}^{\ast}\times\mathcal{G}_{+}^{\ast}$ \ with respect to the
phase variables $(F,Q)\in\mathcal{G}_{+}^{\ast}\times\mathcal{G}_{+}^{\ast}.$
Thereby one finds \cite{BlPrSa} the corresponding to (\ref{3.1}) and
(\ref{3.2}) transformed Poisson structure $\eta:T^{\ast}(\mathcal{G}_{+}%
^{\ast}\times\mathcal{G}_{+}^{\ast})\longrightarrow T(\mathcal{G}_{+}^{\ast
}\times\mathcal{G}_{+}^{\ast})$ at\textrm{\ }$(F,Q)\in\mathcal{G}_{+}^{\ast
}\times\mathcal{G}_{+}^{\ast},$ \ where
\begin{align}
\eta &  =T\mathcal{\vartheta}T^{\ast},\nonumber\\
T  &  =\Phi_{(\tilde{l},l)}^{\prime}\Phi_{(Q,F)}^{{\Large \prime}-1}.
\label{3.3}%
\end{align}
Making use of the expressions
\[
\Phi_{(Q,F)}^{\prime}=\binom{-\,(.)F^{-1}\quad\tilde{l}(.)F^{-1}}%
{-F^{-1}(.)\quad F^{-1}(.)l},\quad\Phi_{(l,\widetilde{l})}^{{\LARGE \prime}%
}=\binom{1\quad0}{0\quad1},
\]%
\[
\Phi_{(Q,F)}^{{\Large \prime}-1}=\binom{-\,(1-\tilde{l}\otimes l^{-1}%
)^{-1}(.)F\quad\quad(1-\tilde{l}\otimes l^{-1})^{-1}\tilde{l}F(.)l^{-1}%
}{-\,(1-\tilde{l}\otimes l^{-1})^{-1}(.)F\quad\quad(1-\tilde{l}\otimes
l^{-1})^{-1}F(.)},
\]%
\begin{equation}
(\Phi_{(Q,F)}^{{\Large \prime\ast}})^{-1}=\binom{-\,F(.)(1-\tilde{l}%
^{-1}\otimes\tilde{l})^{-1}\quad\quad-\,F(.)(1-l^{-1}\otimes\tilde{l})^{-1}%
}{l^{-1}(.)F\tilde{l}(1-l^{-1}\otimes\tilde{l})^{-1}\quad\quad(.)F(1-l^{-1}%
\otimes\tilde{l})^{-1}} \label{3.4}%
\end{equation}
jointly with the $\vartheta$- structure (\ref{3.1}), one gets from (\ref{3.3})
that
\begin{align}
\eta &  =\binom{-\,(1-\tilde{l}\otimes l^{-1})^{-1}(.)F\quad(1-\tilde
{l}\otimes l^{-1})^{-1}\tilde{l}F(.)l^{-1}}{-\,(1-\tilde{l}\otimes
l^{-1})^{-1}(.)F\quad(1-\tilde{l}\otimes l^{-1})^{-1}F(.)}\times\nonumber\\
&  \times\left(
\begin{array}
[c]{c}%
\left[  \tilde{l},\left(  (1-\tilde{l}\otimes l^{-1})^{-1}(.)F(1-\tilde
{l}\otimes l^{-1})^{-1}(.)\right)  _{+}\right]  -\\
\left[  \left(  \tilde{l}^{-1}(.)F\tilde{l}(1-l^{-1}\otimes\tilde{l}%
)^{-1}\right)  _{+},l\right]  -\left[  l^{-1}(.)F\tilde{l}(1-l^{-1}%
\otimes\tilde{l})^{-1},l\right]  -.
\end{array}
\right. \nonumber\\
&  \left.
\begin{array}
[c]{c}%
-\left[  \tilde{l},(1-\tilde{l}\otimes l^{-1})^{-1}(.)F(1-\tilde{l}\otimes
l^{-1})^{-1}(.)\right]  _{+},\\
-\left[  l^{-1}(.)F\tilde{l}(1-l^{-1}\otimes\tilde{l})^{-1},l\right]  +\left[
\left(  (.)F(1-l^{-1}\otimes\tilde{l})^{-1}\right)  _{+},l\right]  ,
\end{array}
\right. \label{3.5}\\
&  \left.
\begin{array}
[c]{c}%
-\left[  \left(  F(.)(1-l^{-1}\otimes\tilde{l})^{-1}\right)  _{+},\tilde
{l}\right]  +\left[  \left(  F(.)(1-l^{-1}\otimes\tilde{l})^{-1}\right)
,\tilde{l}\right]  _{+}\\
\left[  \left(  (.)F(1-l^{-1}\otimes\tilde{l})^{-1}\right)  _{+},l\right]
-\left[  (.)F(1-l^{-1}\otimes\tilde{l})^{-1},l\right]  _{+}%
\end{array}
\right) \nonumber
\end{align}
at $\tilde{l}=QF^{-1}$ and $l=F^{-1}Q\in\mathcal{G}^{\ast}.$

Let now take any Casimir functional $\gamma\in\mathrm{I}(\mathcal{G}^{\ast}).$
Then one construct from the Poisson bracket (\ref{3.5}) the following
Hamiltonian flow on $\mathcal{G}_{+}^{\ast}\times\mathcal{G}_{+}^{\ast}$ $:$
\begin{equation}
\frac{d}{dt}(Q,F)^{\intercal}=\eta\nabla\gamma(Q,F), \label{3.6}%
\end{equation}
where $(Q,F)\in\mathcal{G}_{+}^{\ast}\times\mathcal{G}_{+}^{\ast}$ and
$t\in\mathbb{R}$ is the temporal evolution parameter. The flow (\ref{3.6} ) is
characterised by the following theorem.

\begin{theorem}
The Hamiltonian vector field $d/dt$ on $\mathcal{G}_{+}^{\ast}\times
\mathcal{G}_{+}^{\ast},$ defined by (\ref{3.6}), and the vector field $d/dt$
on $\mathcal{G}_{+}^{\ast}\times\mathcal{G}_{+}^{\ast},$ defined by
(\ref{d6.7}), coincide.
\end{theorem}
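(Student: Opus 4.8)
The plan is to read equation~(\ref{3.6}) as the image, under the B\"acklund transformation~(\ref{3.2}), of the doubled Lax system on $\mathcal{G}^{\ast}\times\mathcal{G}^{\ast}$ carried by the structure $\vartheta$ of~(\ref{3.1}), and then to identify that image with the factorized system~(\ref{d6.7}) furnished by Theorem~\ref{Tm_d6.1}. First I would record what $\vartheta\nabla\gamma$ is for a Casimir $\gamma\in\mathrm{I}(\mathcal{G}^{\ast})$: since then $[\nabla\gamma(l),l]=0$ and $[\nabla\gamma(\tilde{l}),\tilde{l}]=0$ by the determining relations~(\ref{d6.2a}) and~(\ref{d6.6a}), the corrections $[\nabla\gamma(l),l]_{+}$ and $[\nabla\gamma(\tilde{l}),\tilde{l}]_{+}$ in~(\ref{3.1}) vanish, so $\vartheta\nabla\gamma$ reduces to the pair $([\nabla\gamma(l)_{+},l],[\nabla\gamma(\tilde{l})_{+},\tilde{l}])^{\intercal}$, i.e.\ (up to the global sign fixed by the convention in~(\ref{3.6})) to the right-hand sides of the Lax flows~(\ref{d6.5}) and~(\ref{d6.6}). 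Thus the Hamiltonian system on $(\tilde{l},l)$ with Hamiltonian $\gamma$ and Poisson operator $\vartheta$ is exactly the compatible pair of Lax flows for $l=F_{n}^{-1}Q_{n+p}$ and $\tilde{l}=Q_{n+p}F_{n}^{-1}$.

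Next I would use the isomorphism $\eta=T\vartheta T^{\ast}$ with $T=\Phi_{(\tilde{l},l)}^{\prime}\,\Phi_{(Q,F)}^{\prime-1}$ from~(\ref{3.3}). Because $T$ is precisely the (signed inverse) Fr\'echet derivative of the change of variables $(Q,F)\mapsto(\tilde{l}=QF^{-1},\,l=F^{-1}Q)$ encoded by~(\ref{3.2}), this identity says that that change of variables is a Poisson map from $(\mathcal{G}_{+}^{\ast}\times\mathcal{G}_{+}^{\ast},\eta)$ into $(\mathcal{G}^{\ast}\times\mathcal{G}^{\ast},\vartheta)$, and that the $(Q,F)$-Hamiltonian $\gamma$ is the pull-back of the $(\tilde{l},l)$-Hamiltonian $\gamma$. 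By the covariance of Hamiltonian vector fields under Poisson maps, the flow~(\ref{3.6}) is therefore $B$-related to the doubled Lax flow of the previous step: its image under $(Q,F)\mapsto(\tilde{l},l)$ is $([\nabla\gamma(l)_{+},l],[\nabla\gamma(\tilde{l})_{+},\tilde{l}])^{\intercal}$. On the other hand, Theorem~\ref{Tm_d6.1} asserts exactly that the factorized system~(\ref{d6.7}) has the same property --- substituting $l=F_{n}^{-1}Q_{n+p}$, $\tilde{l}=Q_{n+p}F_{n}^{-1}$ into~(\ref{d6.7}) reproduces~(\ref{d6.5}) and~(\ref{d6.6}). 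Since the Fr\'echet derivative $\Phi_{(Q,F)}^{\prime}$ is invertible --- which is what the explicit formulas for $\Phi_{(Q,F)}^{\prime-1}$ and $(\Phi_{(Q,F)}^{\prime\ast})^{-1}$ in~(\ref{3.4}) make precise --- the lift of a vector field through this Poisson map is unique, and hence~(\ref{3.6}) and~(\ref{d6.7}) define one and the same vector field on $\mathcal{G}_{+}^{\ast}\times\mathcal{G}_{+}^{\ast}$.

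The hard part will be legitimizing the invertibility used in the last step, namely that $\Phi_{(Q,F)}^{\prime}$ really is an isomorphism on the relevant $\mathcal{G}_{+}^{\ast}$-components: the formal Neumann-type inverses $(1-\tilde{l}\otimes l^{-1})^{-1}$ and $(1-l^{-1}\otimes\tilde{l})^{-1}$ appearing in~(\ref{3.4}) must be shown to be well defined for the rationally factorized seed elements, which is where the invertibility of $F_{n}$ --- guaranteed by $2\pi/\delta\notin\mathbb{Z}_{+}$ --- and of the associated tensor operators enters, together with controlling the residual scaling gauge $F_{n}\mapsto cF_{n}$, $Q_{n+p}\mapsto cQ_{n+p}$ that the map $(F,Q)\mapsto(\tilde{l},l)$ does not see.

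A more computational alternative that sidesteps the uniqueness argument is to substitute~(\ref{d6.7}) directly into~(\ref{3.6}): applying the operator matrix~(\ref{3.5}) to $\nabla\gamma(Q,F)$ and using the Casimir relations $[\nabla\gamma(l),l]=0=[\nabla\gamma(\tilde{l}),\tilde{l}]$ to annihilate most of its entries, then simplifying the surviving slot-operator $(\cdot)$-expressions via $\tilde{l}=QF^{-1}$ and $l=F^{-1}Q$, one should recover $F_{n}\nabla\gamma(l)_{+}-\nabla\gamma(\tilde{l})_{+}F_{n}$ and $Q_{n+p}\nabla\gamma(l)_{+}-\nabla\gamma(\tilde{l})_{+}Q_{n+p}$; there the only genuine difficulty is the bookkeeping of the tensor and slot-operator notation, and checking that the chain-rule relation between $(\delta\gamma/\delta Q,\delta\gamma/\delta F)$ and $\nabla\gamma(l)$ feeds through correctly.
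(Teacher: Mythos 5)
The paper's own ``proof'' is nothing more than the remark that the theorem follows from a ``simple but slightly tedious calculation of the expression (\ref{3.6})'' --- that is, it is exactly your second, computational alternative: apply the operator matrix (\ref{3.5}) to $\nabla\gamma(Q,F)$, use the Casimir relations to annihilate the $[\cdot,\cdot]_{+}$ corrections, and match the outcome against (\ref{d6.7}). So your fallback route coincides with the paper's (unperformed) argument, while your primary route --- reading (\ref{3.6}) as the $\eta$-Hamiltonian flow, pushing it forward through the B\"acklund map $B:(Q,F)\mapsto(\tilde{l}=QF^{-1},\,l=F^{-1}Q)$, identifying the image with the compatible Lax pair (\ref{d6.5})--(\ref{d6.6}) by Poisson covariance, and then invoking Theorem \ref{Tm_d6.1} to say that (\ref{d6.7}) has the same push-forward --- is genuinely different and more structural. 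What it buys is that one never has to expand (\ref{3.5}); what it costs is precisely the step you flag as ``the hard part,'' and that step conceals a real gap rather than a technicality.

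The gap is this: the derivative of the factorization map is not injective, so $B$-relatedness does not determine the vector field on $\mathcal{G}_{+}^{\ast}\times\mathcal{G}_{+}^{\ast}$. Indeed, for a constant scalar $c$ the variation $(\delta Q,\delta F)=(cQ,cF)$ gives
\begin{equation*}
\delta\tilde{l}=\delta Q\,F^{-1}-\tilde{l}\,\delta F\,F^{-1}=c\tilde{l}-\tilde{l}c=0,
\qquad
\delta l=F^{-1}\delta Q-F^{-1}\delta F\,l=cl-cl=0,
\end{equation*}
so $\Phi_{(Q,F)}^{\prime}$ has a nontrivial kernel and the formal Neumann expressions $(1-\tilde{l}\otimes l^{-1})^{-1}$ in (\ref{3.4}) can at best furnish a one-sided inverse on a complement of that kernel. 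Consequently your covariance argument shows only that (\ref{3.6}) and (\ref{d6.7}) agree modulo a direction tangent to the fibres of $B$ (the scaling gauge you mention), not that they coincide. To close this you would need either to fix the gauge --- e.g.\ normalize the leading coefficient of $F_{n}$, which the polynomial ansatz (\ref{d6.4}) implicitly permits --- and verify that both flows preserve that normalization, or to retreat to the direct substitution of (\ref{3.5}); the latter is what the paper asserts, and it is the only route that settles the statement outright.
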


\begin{proof}
Proof of this theorem consists in simple but a slightly tedious calculation of
the expression \ (\ref{3.6}).
\end{proof}

\begin{remark}
The theorem above solves completely a problem posed in \cite{Dick} about
Hamiltonian formulation of the factorized differential-operator equations
(\ref{d6.7}).
\end{remark}

\subsection{Example 1}

We consider the following pseudo-differential factorized expression
\begin{equation}
l(\partial)=(\partial+u)^{-1}[(\partial+u)(\partial^{2}+2v)-2w]\ \label{d7.10}%
\end{equation}
for $F_{1}:=\partial+u,Q_{3}:=(\partial+u)(\partial^{2}+2v)-2w\in
\mathcal{G}_{+},(u,v,w)\in C^{\infty}(\mathbb{S}^{1};\mathbb{R}^{3}).$ The
respectively factorized differential operator evolution equations
\ (\ref{d7.7}) give rise to the following \cite{BoLiXi} interesting system
\begin{align}
u_{t}  &  =2uu_{x}+2v_{x}-u_{xx},\label{d7.11}\\
v_{t}  &  =2w_{x},\text{ \ \ }w_{t}=w_{xx}+2(wu)_{x}\nonumber
\end{align}
of completely integrable evolution equations.

\begin{remark}
It is worth to mention that the derived above system \ of integrable equations
(\ref{d7.11}) allows the following degenerate purely differential-matrix
linear spectral problem:%
\begin{equation}
\left(
\begin{array}
[c]{cc}%
(\partial^{2}+2v)(\partial-u) & -\lambda\\
\partial-u & -1
\end{array}
\right)  \left(
\begin{array}
[c]{c}%
f\\
g
\end{array}
\right)  =0\ \label{d7.12}%
\end{equation}
for $(f,g)^{\intercal}\in L_{2}(\mathbb{S}^{1};\mathbb{C}^{2})$ and arbitrary
spectral parameter $\lambda\in\mathbb{C}.$
\end{remark}

\subsection{Example 2}

A next example is related with the pseudo-differential factorized expression
\begin{equation}
l(\partial)=[(\partial+w)(\partial+p)]^{-1}[(\partial+w)(\partial
+p)\partial+(\partial+p)u+v]\ \label{d7.13}%
\end{equation}
for $F_{2}:=(\partial+w)(\partial+p)$ and $Q_{3}:=(\partial+w)(\partial
+p)\partial+(\partial+p)u+v\in\mathcal{G}_{+},(u,p,v,w)\in C^{\infty
}(\mathbb{S}^{1};\mathbb{R}^{4}).$ From the factorized differential operator
evolution equations \ (\ref{d7.7}) one easily \ ensues the system
\begin{align}
u_{t}  &  =u_{2x}+2v_{x}+2(uw)_{x},\label{d7.14}\\
v_{t}  &  =v_{2x}+2vw_{x}+2(pw)_{x},\text{ \ \ }\nonumber\\
w_{t}  &  =-w_{xx}+2u_{x}+2ww_{x},\nonumber\\
p_{t}  &  =-p_{xx}-2w_{2x}+2u_{x}+2pp_{x},\nonumber
\end{align}
of completely integrable evolution flows on $C^{\infty}(\mathbb{S}%
^{1};\mathbb{R}^{4}),$ considered also before in \cite{SzBl,BoLiXi} in the
context of generating a new class of integrable dispersionless systems of
hydrodynamic type equations.

\subsection{Example 3}

\begin{quotation}
Let us put now the following pseudo-differential factorized expression%
\begin{equation}
l(\partial)=\partial+(1/4-\alpha^{2}\partial^{2})^{-1}(\gamma\partial
^{2}+v/2\ +\beta/4)+\gamma\alpha^{-2}, \label{d7.15}%
\end{equation}
where $\alpha,\beta$ and $\gamma\in\mathbb{R}$ are constants, $v\in C^{\infty
}(\mathbb{S}^{1};\mathbb{R}),F_{2}:=1/4-\alpha^{2}\partial^{2}$ and
$Q_{3}:=\gamma\partial^{2}+v/2\ +\beta/4\in\mathcal{G}_{+}.$ The related
factorized differential operator evolution equations\ (\ref{d7.7}) are reduced
for the gradient $\ $element $\nabla\gamma(l)_{+}=\nabla\gamma(l)-$
$\nabla\gamma(l)_{-}=\ \frac{1}{2}u_{x}\ -u\partial\ -$ $(1/4-\alpha
^{2}\partial^{2})^{-1}(\gamma\partial^{3}+\partial\ v/2\ +\beta\partial
/4)\in\mathcal{G}_{-}\oplus\{\partial\},$ where and $u:=(1-\alpha^{2}%
\partial^{2})^{-1}v\in C^{\infty}(\mathbb{S}^{1};\mathbb{R})$ and the element
$\partial\in\mathcal{G}$ \ is a character of the Lie algebra $\mathcal{G},$
that is $(\partial,[\mathcal{G}_{\pm},\mathcal{G}_{\pm}])=0,\ $\ to the
following evolution flow:%
\begin{equation}
v_{t}+\beta u_{x}+uv_{x}+2vu_{x}+\gamma u_{3x}=0,\ \label{d7.16}%
\end{equation}
describing simple wave motion \cite{CaHo} of the Euler equations for shallow
water dynamics.
\end{quotation}

\textbf{Acknowledgements}

The authors cordially thank Prof. M. B\l aszak, Prof. B. Szablikowski and
Prof. A. Samoilenko for the cooperation and useful discussion of the results
in this paper during the International Conference in Functional Analysis
dedicated to the 125$^{th}$ anniversary of Stefan Banach held on 18 - 23
September, 2017 in Lviv, Ukraine. Especially authors are thankful to Prof. M.
Pavlov for the interest in our work, instrumental remarks and presenting very
important references.

\end{document}